\newtheorem{theorem}{Theorem}
\newtheorem{remark}{Remark}
\newtheorem{definition}{Definition}
\newtheorem{example}{Example}
\newtheorem{corollary}{Corollary}
\newcommand{\Exists}{\bm{\exists}\kern-0.6em\bm{\exists}}
\newcommand\copyrighttext{%
  \footnotesize This work has been submitted to the IEEE for possible publication. Copyright may be transferred without notice, after which this version may no longer be accessible.}
\newcommand\copyrightnotice{%
\begin{tikzpicture}[remember picture,overlay]
\node[anchor=south,yshift=10pt] at (current page.south) {\fbox{\parbox{\dimexpr\textwidth-\fboxsep-\fboxrule\relax}{\copyrighttext}}};
\end{tikzpicture}%
}
\begin{document}
\title{``ReLIC: Reduced Logic Inference for Composition'' for Quantifier
Elimination based Compositional Reasoning and Verification}
\author{
Hao~Ren,~\IEEEmembership{Member,~IEEE,}
Ratnesh~Kumar,~\IEEEmembership{Fellow,~IEEE,}
and~ Matthew~Clark,~\IEEEmembership{Member,~IEEE}
\thanks{H. Ren is with the Honeywell Aerospace Advanced Technology, Plymouth,
MN, 55441 USA. E-mail: hao.ren2@honeywell.com; R. Kumar is with the Department of Electrical \& Computer Engineering and Computer Science, Iowa State University, Ames, IA, 50010 USA. E-mail: rkumar@iastate.edu; M. Clark is with Galois, Dayton, OH, 45402, USA. E-mail: mattclark@galois.com.}
\thanks{This work was supported by the U.S. National Science Foundation under grants NSF-CCF-1331390, NSF-ECCS-1509420, NSF-PFI-1602089, and NSF-CSSI-2004766. }
}
\maketitle
\copyrightnotice
\begin{abstract}
The paper presents our research on quantifier elimination (QE) for compositional reasoning and verification. For compositional reasoning, QE provides the foundation of our approach, serving as the {\em calculus for composition} to derive the {\em strongest system-property} in a single step, from the given component atomic-properties and their interconnection relation. We first developed this framework for time-independent properties, and later extended it to time-dependent property composition. The extension requires, in addition, shifting the given properties along time to span the time horizon of interest, he least of which for the strongest system-property is no more than the total time horizons of the component level atomic-properties. The system-initial-condition is also composed from atomic-initial-conditions of the components the same way. It is used to verify a desired system-level property, alongside the derived strongest system-property, by way of induction. Our composition approach is uniform regardless of the composition types (cascade/parallel/feedback) for both time-dependent and time-independent properties.
%and also works for input-complete as well as input-incomplete properties. 
We develped a new prototype verifier named \texttt{ReLIC} (Reduced Logic Inference for Composition) that implements our above approaches. We demonstrated it through several illustrative and practical examples. Further, we advanced the $k$-induction based model-checking with QE capabilities, by formulating its base and inductive steps into QE problems where all the variables are universally quantified. Our integration of the QE solver \texttt{Redlog} with the $k$-induction based model-checking tool \texttt{JKind}, shows the successful solving of a non-linear problem that the SMT capable \texttt{JKind} failed to resolve. Finally, we also showcase the recent adoption of our approaches within an industrial V\&V tool suite for augmented static analysis of Simulink models and Deep Neural Networks (DNNs).
\end{abstract}

% Note that keywords are not normally used for peerreview papers.
\begin{IEEEkeywords}
Quantifier elimination, model-checking, compositional reasoning and verification, implementation.
\end{IEEEkeywords}

% For peer review papers, you can put extra information on the cover
% page as needed:
% \ifCLASSOPTIONpeerreview
% \begin{center} \bfseries EDICS Category: 3-BBND \end{center}
% \fi
%
% For peerreview papers, this IEEEtran command inserts a page break and
% creates the second title. It will be ignored for other modes.
\IEEEpeerreviewmaketitle

\section{Introduction}
% The very first letter is a 2 line initial drop letter followed
% by the rest of the first word in caps.
% 
% form to use if the first word consists of a single letter:
% \IEEEPARstart{A}{demo} file is ....
% 
% form to use if you need the single drop letter followed by
% normal text (unknown if ever used by the IEEE):
% \IEEEPARstart{A}{}demo file is ....
% 
% Some journals put the first two words in caps:
% \IEEEPARstart{T}{his demo} file is ....
% 
% Here we have the typical use of a "T" for an initial drop letter
% and "HIS" in caps to complete the first word.
\IEEEPARstart{D}{istributed} Cyber-Physical systems (CPSs) are integrations of computation, networking, and physical components. For instance, Unmanned Systems Autonomy Services (UxAS) system \cite{rasmussen2016field} integrates of a collection of services that communicates via a common messaging architecture. It is a framework to host flexible construction and deployment of software services that implement autonomy algorithmes on-board Unmanned Aerial Vehicles (UAVs) \cite{kingston2016automated} to enable autonomous functionalities. Successful missions of the UAVs depend on highly reliable design. One aspect of a dependable design is a model-based approach for testing/validation/verification of a distributed CPS. Within this paradigm, compositional reasoning is employed for the scalability of the model-based reasoning and analysis methods, and utilizes composition of component properties to derive the properties of a system composed of those components. Without the flexibility of compositional reasoning, a system often is designed with a modular architecture  \cite{schmidt2013verification}.

In the work presented here, we demonstrate that ``quantifier elimination'' serves as a calculas of property composition for compositional reasoning through the strongest system-property computation, that is further used to aid verification of postulated system properties. Quantifier elimination (QE) is a powerful technique for gaining insight into logic problems through simplification. The QE process essentially projects a logic formula to a lower dimension space of only its free variables. For instance, in real domain, $\exists x(y \ge x^2)$ can be reduced to its logical equivalence $y\ge 0$ (since $x^2$ must be either positive or zero), in which the dimension $x$, existentially quantified in the former formula, has been projected out. A theory is said to admit QE if there exists a quantifier-free equivalence for every formula in this theory. Examples of such are real closed field and an extension of  Presburger arithmetic (a first-order theory of the natural numbers with addition) \cite{tarski1998decision, enderton2001mathematical}. This suggests that QE is applicable to many practical problems in the real world.

%In this paper we propose a QE-based framework that performs property composition, and also introduced the notion/proof of the \textit{strongest system-property} derived from property composition. Given a system composed of $N$ components, wherein the ``atomic-property'' of the $i^\text{th}$ component is expressed as a predicate $\phi_i$ over its input-output variables, we show that the ``strongest system-property'' that can be claimed that the composed system satisfies, can be obtained by existentially quantifying the system's internal variables in the conjunct of component contracts, $\bigwedge_{i=1}^N (\phi_i)$ together with the additional constraints resulting from the component interconnection relations. Thereby we establish that QE serves as a foundation for property/contract composition. 

Our QE-based composition approach is unique in its uniformity regardless of the composition types (cascade/parallel/feedback) for both time-dependent and time-independent properties, and also works for input-complete as well as input-incomplete properties. Initially, each component is annotated with a given property, expressed as FOL formulae over component input and output variables. Once the {\em strongest system property} has been inferred employing our QE-based approach, next checking if the composed system satisfies a desired system-level  property, is reduced to checking if the aforementioned strongest system-property implies the desired system-level property, which can be formulated as a QE problem in itself. %The results of the paper apply to any distributed CPS such as a manufacturing system (see Example~\ref{ex1}). 

Using our approach, we also identified that the precise range propagation problem within Simulink models and even the deep neural networks (DNNs), which is key to static analysis but that previously often suffered from over-approximation under input dependency, as an instance of the strongest system-property computation. Our QE-based composition and range propagation approach has been adopted by an industrial avionics application V\&V tool suite, Honeywell Integrated Lifecycle Tools \& Environment (\texttt{HiLiTE}) \cite{bhatt2010towards, ren2016improving} for its state-of-the-art static analysis.

Another important innovation of the paper is the extension of QE-based composition approache to the time-dependent scenarios, where a property can be evaluated upon a finite history of variables. For example, in a manufacturing system with bounded-processing delay $d$ (so every arrival at time step $k$, departs from the system by the time step $k+d$), a cascade of two such systems will have a total delay of $2d$. In the composition of such time-dependent atomic-properties using QE, we show that the composed system-property will in general involve a constraint over a longer history of inputs and outputs, but within the cumulative history lengths of all its component atomic-properties. We introduce, accordingly, the notion of {\em property order}, a method to infer the corresponding {\em system order}, and the inference of the strongest system-property from the composition of atomic-properties and their time-shifted replicas.

We have developed our QE-based compositional verification prototype verifier, namely \texttt{ReLIC} (Reduced Logic Inference for Composition), based on the integration of an open source QE solver \texttt{Redlog} with a compositional analyze \texttt{AGREE} \cite{stewart2017architectural, gacek2015towards, cofer2012compositional}. Our implementation uses  \texttt{AGREE}'s front-end for the specification of system architecture/connectivity, components, and their properties in the modeling language AADL\cite{feiler2006architecture} and its \texttt{AGREE} annex, and interacts with \texttt{Redlog} at the back-end to output the composition result. This feature of our approach is also adopted within the industrial V\&V tool suite \texttt{HiLiTE} for augmented invariant derivation of Simulink models with feedback loops.

Additional to the aforementioned work, we also reduce the satisfiability problem often formulated from model-checking into an instance of QE. Specifically, we look into the $k$-induction \cite{sheeran2000checking, kahsai2011pkind, ghassabani2017efficient} in bounded model checking, implemented in \texttt{JKind} \cite{gacek2015jkind} to verify the bounded time invariant properties of systems specified in Lustre language \cite{halbwachs1991synchronous}. %Under this scheme, to prove a transition system satisfies some invariant property, one needs to prove the base case and the inductive case for some $k$. 
In original $k$-induction, both base and inductive verification steps are interpreted as instances of an SMT (Satisfiability Modulo Theory) \cite{barrett2009satisfiability} problem. Hence, QE can be used as an alternative for common SMT-solvers, since satisfiability checking of a formula %$\phi(x_1,\dots,x_n)$ in $n$-variables 
is equivalent to checking the truth of the corresponding existentially quantified formula. %$\exists x_1\dots\exists x_n \phi(x_1,\dots,x_n)$ evaluates to $true$ or $false$. %Thus with regards to satisfiability, the capability of SMT solvers and QE solvers overlap, and can vary depending on the algorithms they employ and the theories they support. 
\cite{jovanovic2013solving} has shown through the accumulated experimental data that QE solvers like \texttt{Redlog} in general are more powerful than SMT solvers like \texttt{Z3},  \texttt{cvc3},\texttt{iSAT}, in certain domain of non-linear arithmetic, regarding the execution time, and the range of problems those can solve. We have integrated \texttt{Redlog} with \texttt{JKind} for such enhancement on nonlinearity SMT-solving ability through quantifier-elimination. A related application of using QE to generate property-directed invariants in a $k$-induction based framework can be found in \cite{champion2015generating}. We summarize our key contributions presented in this paper in below:
\begin{itemize}
\item Establish quantifier elimination as a foundational calculus for property composition.
\item Develop a QE-based framework that performs composition of properties that depend on the input/output variables of either only the current step (termed time-independent) or over a history of past steps (termed time-dependent). The composition approach is uniform regardless of the interconnection types (cascade/parallel/feedback) for both time-dependent and time-independent properties. %, and also works for input-complete as well as input-incomplete properties.
\item Introduce the notion of the strongest system-property that can be inferred from the components' atomic-properties and their connectivity, using a QE-based composition approach.
\item Translate the verification of system-level postulated property from the derived strongest system-property into an instance of QE using universal quantifiers for automation of system-level property verification.
\item Add QE as a complementary option for SMT-solving for model-checkers.
\item Implement all the above in a prototype verifier named \texttt{ReLIC}, that employs AADL and \texttt{AGREE} Annex for component and property specification and \texttt{Redlog} for quantifier elimination to automatically compute the strongest system-property. Also integrate QE solver \texttt{Redlog} with the model-checker \texttt{JKind}.
\item Demonstrate our implementations on several illustrative and practical examples, and also
introduce the integration of our QE-based approaches into industrial V\&V tool suite \texttt{HiLiTE} for advanced static analysis of Simulink models and DNNs.
\end{itemize}

\subsection{Related work}
The first implementable quantifier elimination procedure, introduced by Collins in 1975, is based on cylindrical algebraic decomposition (CAD) \cite{collins1975quantifier} of double exponential complexity. Later on, Tarski showed that the first-order logic (FOL) over the real closed field admits QE \cite{tarski1998decision}. However, the original CAD algorithm for reals is of double exponential complexity. After decades of effort on improving QE techniques, tools capable of practical problem solving with various specialized procedures for restricted problem classes have been further developed and enriched, such as \texttt{Redlog}, \texttt{Qecad}, and \texttt{Mathematica}. For instance, \texttt{Redlog} \cite{redlog} implements algorithms based on \textit{virtual substitution} \cite{weispfenning1988complexity, weispfenning1990complexity} and \textit{partial CAD} \cite{collins1998partial}, that run fast for property expressions with low degree polynomial terms of quantified variables. 

Definition of interfaces with contracts and their composition is also studied in \cite{tripakis2011theory, benveniste2017synchronous}. In particular, \cite{tripakis2011theory} (that extends the concepts of \cite{benveniste2017synchronous}), introduced two separate definitions for cascade (termed connection in their paper) and feedback compositions (Definition 9 and 12, respectively). It turns out that in our approach, the same formula computes the composed system-level property regardless of whether the composition is cascade or feedback or anything else. Our proposed composition is (i) simple (involving quantifier elimination; see the formula in Theorem~\ref{thm:01} and compare it to those in Definitions 9 and 12 of \cite{tripakis2011theory}), (ii) universal (the same form of formula works for all sorts of interconnections), (iii) automatically internalizes the variables that are no longer available externally to the system, and so no extra step of ``hiding'' (viz., Definition 13 of \cite{tripakis2011theory}) is needed, (iv) proves that the composed property is the strongest possible, and (v) finally extends the approach also to time-dependent properties, by way of introducing the time-shifted replicas of component properties.

Hierarchical modeling are being adopted for complex and safety-critical applications \cite{lu2013hierarchy, duan2018refinement} for efficient incorporate of formal methods in the verification process. Many systems can express their properties or requirements formally in the first-order logic formulas. Hence QE has been widely adopted broadly. For example, in \cite{jirstrand1997nonlinear}, QE was used for nonlinear problem solving in continuous control system design. \cite{lafferrierre2001symbolic, anai2001reach} used QE for exact reachable set computation of linear dynamics with certain eigen-structures and semi-algebraic initial sets. This method was then generalized in \cite{tiwari2003approximate} on linear systems with a broader range of eigen-structures. Furthermore, \cite{taly2010switching, sturm2011verification} developed QE-based verification and synthesis techniques for continuous and switched dynamical systems. QE solvers are also used for the bounded model-checking (BMC) as back-end reasoning egines in \cite{kwiatkowska2015synthesising}, as well as in the theorem-prover \texttt{KeYmaera} \cite{platzer2008keymaera}  for hybrid systems.

%The rest of the paper is organized as follows. Section~\ref{sec:QEforV} describes the integration of \texttt{Redlog} with \texttt{JKind} to provide the model-checker an additional solver option. Section~\ref{sec:ReLIC} describes our framework for QE-based property composition, along with the prototype tool \texttt{ReLIC} for time-independent property composition. Section~\ref{sec:ReLICforTD} provides the extension of QE-based property composition to compose time-dependent properties. Each of these sections provides illustrative examples of the said implementations. Section~\ref{sec:applications} showcases extended analysis applications performed by our QE-based approaches integrated into an industrial V\&V tool suite. Finally, the conclusion and future work are discussed in Section~\ref{sec:conclusion}.

%\subsubsection{Subsubsection Heading Here}
%Subsubsection text here.

% needed in second column of first page if using \IEEEpubid
%\IEEEpubidadjcol
\section{QE support for Verification: Integration of \texttt{Redlog} with \texttt{JKind}}
\label{sec:QEforV}
\subsection{Preliminary}
$k$-induction based bounded model checking (BMC) has been widely used for the time-dependent system verification. Here we show how QE solver can serve as the back-end solver aiding $k$-induction proofs. We examine the $k$-induction algorithms adopted in the tool JKind \cite{cofer2012compositional,gacek2015jkind} for transition systems written in Lustre language \cite{halbwachs1991synchronous}. JKind is developed upon its precursor \texttt{Kind} \cite{kahsai2011pkind}, with the advance of  platform independence and easy integration interface in Java environment. 

We first briefly describe the base step and inductive step of $k$-induction proof \cite{sheeran2000checking, kahsai2011pkind} and their formulations. Given a transition system $S$ of state variable vectors $x$ in some formal logic language $\mathcal{L}$, specified with initial condition $I(x)$ and transition relation $T(x,x')$. Let $x(i)$ denote $x$ at the $i^\text{th}$ time step. A property of $S$ (i.e., a predicate over $x$) , denoted by $\phi(x)$, is a valid property of $S$, if it satisfies both the base condition and inductive condition in $\mathcal{L}$ for some finite $k\in\mathbb{Z}_{\ge1}$ and for any $n\in\mathbb{Z}_{\ge0}$:
\begin{itemize}
\item base condition: $I(x{(0)})\wedge \bigwedge_{i=0}^{k-2}T\big(x{(i)},x{(i+1)}\big)\Rightarrow \bigwedge_{i=0}^{k-1} \phi(x{(i)})$;
\item inductive condition: $\bigwedge_{i=n}^{n+k-1} T\big(x{(i)},x{(i+1)}\big)\wedge \bigwedge_{i=n}^{n+k-1} \phi(x{(i)})\Rightarrow \phi(x{(n+k)})$.
\end{itemize}
The base condition checks that $\phi(x)$ is satisfied for the beginning $k$ steps from the initial step.  The violation of the base condition suggests the existence of at least one concrete counterexample that invalidates $\phi$. The inductive condition checks that $\phi$ is satisfied at one step given only that all of its $k$-step precursor satisfy $phi$. A counterexample from the inductive condition check is not necessarily concrete since the $k$-step precursor may not be reachable from any initial state of $S$. On the other hand, when both base condition and inductive condition hold for some finite $k$, $\phi$ is guaranteed to hold over all steps of all reachable traces. A $k$-induction proof often increments $k$ by 1 from 0, excluding any spurious counterexamples from the induction condition check. Because $k$-induction process may never terminate ($k$ may goes up to infinity before conclusion) due to its inherent undecidable nature, a pre-determined bound on $k$ is specified as a hard stop.

The architecture of \texttt{JKind} is illustrated in Figure~\ref{fig:JKindArch}. Its input script is in Lustre language, specifying system description and properties to be verified. \texttt{JKind} initiates three processes for base condition check, inductive condition check, and invariant generation respectively. The invariant generation works in parallel withe the based and inductive cases independently, searching for invariant candidates from existing templates, that is possible to accelerate the induction condition check by strengthening its premise. Each process has its own copy of back-end SMT solver engine. A director program coordinates three processes asynchronously and outputs verification result in a user-friendly interface.

\vspace{-0.1in}
\begin{figure}[!htb]
\centering
\includegraphics[scale=0.41]{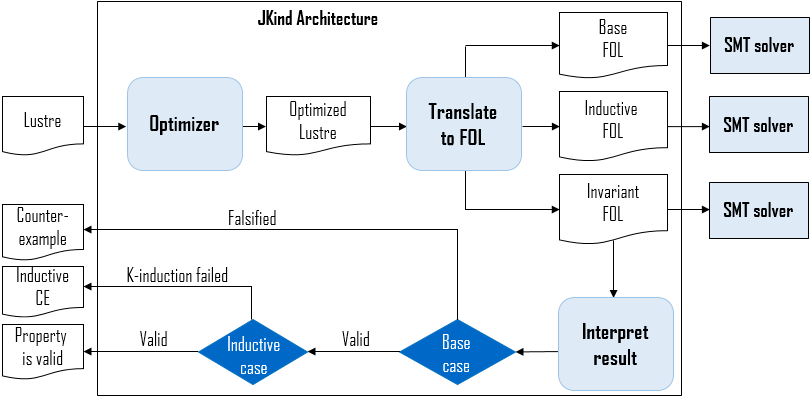}
\caption{\texttt{JKind} architecture.}
\label{fig:JKindArch}
\vspace{-0.1in}
\end{figure}

\vspace{-0.1in}
\subsection{Reduction of SMT instance to QE instance}
An SMT instance is a FOL formula, whose satisfiability needs to be checked. An unquantified FOL formula $\phi(x_1,\dots x_n)$ is satisfiable is defined as there \textit{exists} at least one assignment of the variables with which the formula evaluates to $true$. This existence of assignment can be expressed as an existential (as opposed to universal) QE instance, i.e., $\exists x_1\dots \exists x_n \phi(x_1,\dots x_n)$. Since the QE instance has now all its variables quantified in the formula, its equivalent quantifier-free formula is either $true$ or $false$. In the case of former, a satisfiable assignment of $(x_1\dots x_n)$ is also returned. In another word, check if  $\phi(x_1,\dots x_n)$ is satisfiable as an SMT problem is reduced to checking if $\exists x_1\dots \exists x_n \phi(x_1,\dots x_n)$ is equivalent to $true$ as a QE problem.

State-of-the-art SMT solvers such as \texttt{Z3} \cite{z3} are commonly used problem solving engines in the research community. \texttt{Z3} supports various data structures such as lists, arrays and bit vectors, and is capable of solving linear/non-linear problems in mixed boolean/integer/real domain. The QE solver \texttt{Redlog}, on the other hand, works exclusively with interpreted FOL formulas containing atoms from one particular \texttt{Redlog}-supported domain, which corresponds to a choice of admissible functions and relations with specified semantics, including for example non-linear real arithmetic (Tarski Algebra), parametric quantified Boolean formulae, Presburger arithmetic, and others.

We have implemented two new modules in Java for the QE augmentation of $JKind$. The first module \texttt{S2RTool} is to translate the SMT formulas redirected from \texttt{JKind} in SMT-Lib 2.0 format to \texttt{Redlog} input format, based on the parser automatically generated via \texttt{Antlr v4} \cite{Antlr}. The second module is an interpreter that processes the \texttt{Redlog} result back to \texttt{JKind} for termination output or the next induction proof iteration. Figure~\ref{fig:RedlogJKindIntegration} shows the integration of the two new modules with \texttt{JKind} and the data flow redirection. The dotted arrows show the new route of the QE capability, whereas the solid arrows show the original route of SMT capability. Figure~\ref{fig:Z3vsRedlog} shows a simple comparison between \texttt{Z3} and \texttt{Redlog} input scripts for the same underlying problem.

\begin{figure}[!htb]
\centering
\includegraphics[scale=0.38]{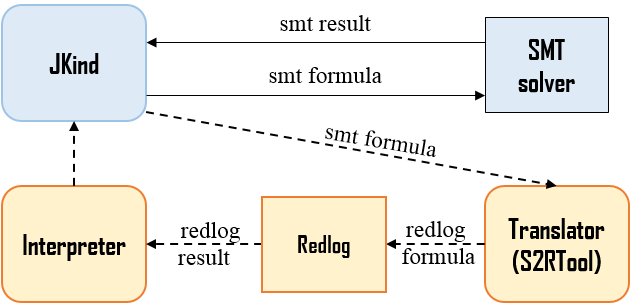}
\caption{Data redirection in \texttt{Redlog}-integrated \texttt{JKind}.}
\label{fig:RedlogJKindIntegration}
\end{figure}
\vspace{-0.15in}

\begin{figure}[!htb]
\centering
\includegraphics[scale=0.6]{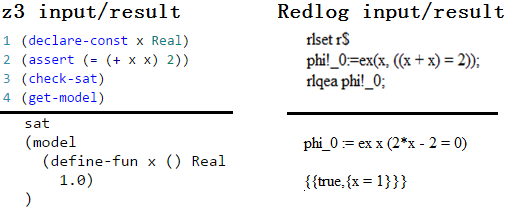}
\caption{\texttt{Z3} vs. \texttt{Redlog}.}
\label{fig:Z3vsRedlog}
\end{figure}

\begin{remark}\rm
\label{rmk:01}
\texttt{Redlog} setting does not allow problems in mixed real-integer domain, but a QE problem involving both real and integer type of variables can still be handled in unified real domain, by mapping integer variables to real variables with extra constraints as follows: each bounded integer variable $x$, is defined as real appending the auxiliary formula $\bigvee_{i=1}^n(x=v_i)$, where $\{v_1,\dots,v_n\}$ is the finite set of integer values within the original range of $x$. An infinitely-enumerated integer variable is simply defined as real, noting that the original problem thereby has been over-approximated. By the nature of over-approximation, a $false$ result of the over-approximated problem guarantees a $false$ result of the original problem. In contrast, a $true$ result has to be examined carefully. Only returned assignment of integer type is accepted by the original problem. Otherwise, we simply refine the over-approximated problem by adding the negation of the spurious assignment to exclude it from the result of the next iteration of QE solving. 
\end{remark}

\subsection{Experimental result}
We have successfully experimented our \texttt{Redlog}-integrated \texttt{JKind} on a set of Lustre-based programs including a nonlinear fuzzy logic model. The fuzzy logic model executes one out of 54 different nonlinear computation actions, depending on its decision making (linear) conditions evaluated by 4 real inputs $x_1,x_2,x_3$ and $x_4$. Each execution invokes a call to a $4^\text{th}$-order polynomial function to compute the value for the common output variable $y$. All the input/output variables are of real type while the execution switch variable $N$ is an integer that enumerates from 1 to 54. For instance, if the inputs satisfies the condition for $N = 1$, the output is computed by the $4^\text{th}$-order polynomial: $y=-2.22222x_1-2x_2-4x_3+10x_4+8.88889x_1x_2+7.40741x_1x_3+59.25926x_1x_4+12x_2x_3+32x_2x_4+40x_3x_4-59.25926x_1x_2x_3-177.77778x_1x_2x_4-74.07407x_1x_3x_4-240x_2x_3x_4+888.88889x_1x_2x_3x_4+1$. 

There are two properties to be verified, one of which is to check whether the absolute value of the output of the fuzzy logic is always bounded by 1. \texttt{JKind} is unable to resolve this since all of its back-end SMT solvers (\texttt{Z3}, etc.) fail to terminate on this particular nonlinear problem. On the contrary, our \texttt{Redlog}-integrated \texttt{JKind} concludes that one property is valid, and the other property is invalid, for which it reports a concrete counterexample.  Figure~\ref{fig:fuzzyModelResult} shows a brief log of QE-integrated \texttt{JKind} execution and results, in which the entire execution takes less than 16 seconds on a standard laptop. A detailed log (omitted here) shows that texttt{Redlog} responses to each QE query in less than 1 second.

\begin{figure}[!htb]
\centering
\includegraphics[scale=0.58]{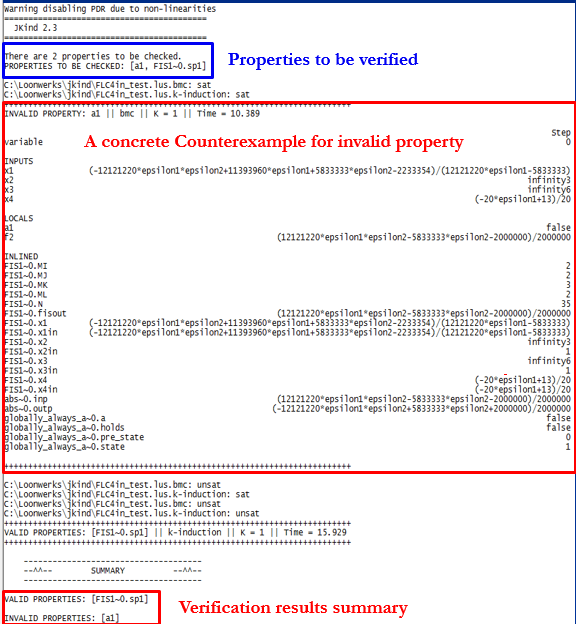}
\caption{Verifying a fuzzy logic model using QE-integrated \texttt{JKind}.}
\label{fig:fuzzyModelResult}
\end{figure}

\section{Time-independent property composition and \texttt{ReLIC} implementation}
\label{sec:ReLIC}
A modular approach to establish system correctness involves the compositional paradigm \cite{henzinger2001assume, cofer2012compositional, bogomolov2014assume, mehrpouyan2016complex, stewart2017architectural}, within which a property of a module (component/system) is specified by a contract of assume-guarantee (A-G) style. 
%represented by a pair $(A,G)$, where $A$ and $G$ are FOL formulae: $G$ describes the guaranteed behavior of the module (typically a predicate over the input and output variables) while $A$ describes the assumed behavior of the environment with which the module interacts (typically a predicate over the input variables). 
%In many developments, $(A,G)$ is treated as its logical equivalence $[A\Rightarrow G]\equiv (A\wedge G)\vee\neg A$ in First Order Logic format.
Another direction of compositional reasoning and verification aims to infer the system-level property from the atomic-properties of its components together with their interconnection. Existing work of such \cite{alur1999reactive, tripakis2011theory, broy2012specification} present such composition under certain relational interface framework, handling the composition of different interconnection types (cascade/parallel/feedback) differently.

\cite{cofer2012compositional} introduced a compositional approach for modular system, where verification a postulated contract property for a system with $N$ components from the given components' contracts is decomposed to $N+1$ verification steps: one for each component and an extra one for checking the system as a whole. Each component-level verification step establishes the guarantee of the component by showing that its assumption is met under the postulated system-level assumption and the guarantees of all its upstream components. After all the components are checked to behavior as expected in their contract guarantees, the final verification step establishes that the postulated system-level guarantee is met under the guarantees of all the components. Such $(N+1)$-step compositional reasoning is implemented in the tool \texttt{AGREE}. It describes the system architecture in AADL language\cite{feiler2006architecture}, and specifies the property contracts in the \texttt{AGREE} annex. \texttt{AGREE} is developed as a plug-in tool within the Eclipse-based OSATE2 \cite{OSATE} platform that provides AADL v2 grammar support, and it uses \texttt{JKind} as its back-end model-checker for each of the $N+1$ verification steps. Figure~\ref{fig:AGREEArch} shows the architecture of \texttt{AGREE} and its OSATE2 platform.

\begin{figure}[!htb]
\centering
\includegraphics[scale=0.6]{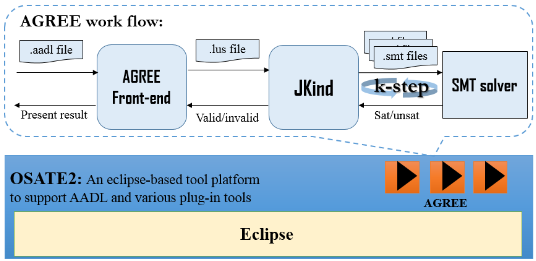}
\caption{Architecture of \texttt{AGREE} within OSATE2 platform.}
\label{fig:AGREEArch}\vspace{-0.1in}
\end{figure}

We propose a QE-based framework that introduces the notion/proof of the \textit{strongest system-property} derived from composition for time-independent as well as time-dependent properties in a single step, and also 
%for input-complete as well as input-incomplete cases.
performs property composition in a \textit{uniform} way regardless of the interconnection types (cascade/parallel/feedback). 

\subsection{Proposed QE-based compositional reasoning and verification}
%Our QE-based compositional reasoning and verification is based upon the notion of  \textit{system-property}, and further the inference of the \textit{strongest system-property}, given the components' \textit{atomic-properties} and system's interconnection relation. 
We start with the following definitions to formalize the concepts of atomic- versus system-property.

\begin{definition}\rm
\label{def:01}
An {\em atomic-property}, of a component, is a first-order predicate over sequences of component inputs and outputs; it is said to be {\em time-independent} if the predicate is over a single time-step of inputs and outputs, and otherwise it is called {\em time-dependent}.
\end{definition}

\begin{definition}\rm
\label{def:02}
A {\em system-property}, of a system of connected components, is also a first-order predicate over its sequences of system's inputs and outputs {\em that is implied} by the atomic-properties of the components and their interconnection relations (as captured as the equalities over the inputs and outputs of the components).
\end{definition}

\begin{definition}\rm
\label{def:03}
The {\em strongest system-property}, of a system of connected components, is a system-property that is stronger than any other system-property. (A predicate $p$ is said to be stronger than predict $q$ if and only if $p\Rightarrow q$ holds.)
\end{definition}

\begin{example}\rm
\label{ex1}
In a cascade connection of two identical manufacturing modules, suppose each satisfies a buffer overflow {\em atomic-property}: $\#_{in_1} - \#_{out_1} < B$ and $\#_{in_2} - \#_{out_2}< B$. Also, the cascade interconnection relation can be captured as $\#_{out_1}=\#_{in_2}$. Then for a cascade of two copies, one can verify that $\#_{in_1} - \#_{out_2}\leq 2B$ is a {\em system-property}, by showing  $\big((\#_{in_1} - \#_{out_1}< B)\wedge(\#_{in_2} - \#_{out_2}< B)\wedge (\#_{out_1}=\#_{in_2})\big)\Rightarrow(\#_{in_1} - \#_{out_2}\leq 2B)$. Its {\em strongest system-property} is given by, $\#_{in_1} - \#_{out_2}<2B$, the inference of which is discussed below. 
\end{example}

Consider a system $S$ composed of $N$ components. Let $X:=\{x_1,\dots,x_n\}$ be the set of variables in $S$, $X_\text{int}:=\{x_1,\dots,x_{m\le n}\}\subseteq X$, be the set of internal variables (namely, all the variables hidden from the system interface), $X_\text{sys}:=X\setminus X_\text{int}=\{x_{m+1},\dots,x_n\}$ be the set of external variables (namely, the inputs/outputs of $S$), and $C:=\{(x_p,x_q)~|~x_p$ and $x_q$ are variables of connected ports in $S\}$ be the interconnection relation set. Let $V$ denote the assignments of variables in $X$, and $V_\text{sys}$ denote the assignments of variables in $X_\text{sys}$. A \textit{valid} assignment of a predicate $\phi$ is an assignment such that $\phi$ evaluates to $true$. Suppose the $i^\text{th}$ component's atomic-property is described by a property $\phi_i$. A uniform way of automatically deriving the strongest system-property for time-independent properties is given next.

\begin{theorem}\rm
\label{thm:01}
The {\em strongest system-property}, established upon the given component properties (as time-dependent atomic-properties) and the interconnection relation of system $S$ (as described by $C:=\{(x_p,x_q)~|~x_p$ and $x_q$ are variables of connected ports in $S\}$) is given by,
%\vspace{-0.1in}
\begin{equation}
\label{eqn:01}
\exists x_1\dots\exists x_m\Big(\bigwedge_{i=1}^N \phi_i\wedge\bigwedge_{(x_p,x_q)\in C}(x_p=x_q)\Big).
\end{equation}
\end{theorem}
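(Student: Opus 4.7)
The plan is to prove the theorem in two parts, matching Definition~\ref{def:03}: first, show that the formula in~(\ref{eqn:01}) is a valid \emph{system-property} in the sense of Definition~\ref{def:02}; second, show that any other system-property is implied by it. Throughout, I will work directly with valid assignments, treating the formula in~(\ref{eqn:01}) as a predicate over $X_\text{sys}$ (since $x_1,\dots,x_m$ are bound).

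For the first part, I would fix an arbitrary assignment in $V$ satisfying the conjunction $\bigwedge_{i=1}^N \phi_i\wedge\bigwedge_{(x_p,x_q)\in C}(x_p=x_q)$, and argue that its restriction to $X_\text{sys}$ is a valid assignment of~(\ref{eqn:01}): the values of the internal variables $x_1,\dots,x_m$ in the original assignment serve as the existential witnesses. This establishes that~(\ref{eqn:01}) is implied by the atomic-properties together with the interconnection relations, so it qualifies as a system-property per Definition~\ref{def:02}.

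For the second part, let $\psi$ be any system-property of $S$, i.e., a predicate over $X_\text{sys}$ with
\[
\bigwedge_{i=1}^N \phi_i\wedge\bigwedge_{(x_p,x_q)\in C}(x_p=x_q)\;\Rightarrow\;\psi.
\]
Suppose an assignment $v_\text{sys}\in V_\text{sys}$ satisfies~(\ref{eqn:01}). Then by the semantics of $\exists$, there exist values of the internal variables $x_1,\dots,x_m$ extending $v_\text{sys}$ to some $v\in V$ under which the unquantified body holds. Applying the implication above to $v$, we get that $\psi$ holds at $v_\text{sys}$ (since $\psi$ depends only on $X_\text{sys}$). Thus~(\ref{eqn:01})$\Rightarrow\psi$, which by Definition~\ref{def:03} is precisely the statement that~(\ref{eqn:01}) is stronger than $\psi$. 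Combining both parts gives the strongest system-property claim.

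The only delicate point, and the one I would write out most carefully, is the scope/projection argument: that (i) the formula in~(\ref{eqn:01}) genuinely ranges only over $X_\text{sys}$ after existential quantification of the $m$ internal variables, and (ii) system-properties per Definition~\ref{def:02} are predicates over $X_\text{sys}$, so comparing them to~(\ref{eqn:01}) via implication is well-posed. The interconnection equalities $x_p=x_q$ play an important but easy role here: they are absorbed into the body of the existential, so no external ``hiding'' operation (as in \cite{tripakis2011theory}, Definition~13) is needed. The rest of the argument is essentially a two-line exercise in the semantics of $\exists$ and implication, so I expect no other obstacles.
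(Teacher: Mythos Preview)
Your proposal is correct and follows essentially the same two-step argument as the paper: first establish that~(\ref{eqn:01}) is a system-property via the trivial implication $\Phi\Rightarrow\exists x_1\dots\exists x_m\,\Phi$, then show strongestness by unpacking the existential witness and applying the defining implication of an arbitrary system-property $\psi$. Your write-up is in fact slightly more explicit about the projection/scope issue than the paper's own proof, but the content is identical.
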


\begin{proof}
We first prove that \eqref{eqn:01} is a system-property, and then prove that it is stronger than any other system-property.

Firstly, \eqref{eqn:01} can be rewritten as $\exists x_1\dots\exists x_m(\Phi)$, where
\begin{equation*}
\Phi:=\bigwedge_{i=1}^N \phi_i\wedge\bigwedge_{(x_p,x_q)\in C}(x_p=x_q),
\end{equation*}
is the conjunction of the components' atomic-properties and the interconnection relation. It's easy to check that $\Phi\Rightarrow$~\eqref{eqn:01} always holds. Also note that \eqref{eqn:01} quantifies the internal variables in $X_{int}$, and as a result is a predicate over only system-level input/output variables. It follows that \eqref{eqn:01} is a system-property.

Secondly, by the semantic meaning of \eqref{eqn:01}, for any valid assignment $\vec v_{sys}=(v_{m+1}\dots v_n)\in V_\text{sys}$ of \eqref{eqn:01}, there exists a valid assignment $\vec v=(v_1\dots v_n)\in V$ of $\Phi(\vec v)$. Also for any system-property $\phi_{sys}$, we have $\Phi\Rightarrow\phi_{sys}$ by definition of system-property. So $\Phi(\vec v)=true$ in turn implies $\phi_{sys}(\vec v_{sys})=true$ (only the assignment $v_{m+1}\dots v_n$ is relevant to $\phi_{sys}$). Thus whenever \eqref{eqn:01} is $true$, $\phi_{sys}$ is also $true$, i.e., \eqref{eqn:01}$~\Rightarrow\phi_{sys}$. Thus, \eqref{eqn:01} is stronger than any other system-property, completing the proof. 
\end{proof}

\begin{example}\rm
Recall the cascade manufacturing system in Example~\ref{ex1}. Its strongest system-property can be derived from a QE process as below:
\begin{equation*}
\begin{split}
\exists\#_{out_1}\exists\#_{in_2}\big((\#_{in_1} - \#_{out_1}<B)\wedge(\#_{in_2} - \#_{out_2}<B)\\\wedge(\#_{out_1}\text{=}\#_{in_2})\big)\stackrel{\text{qe}}{\equiv}(\#_{in_1} - \#_{out_2}<2B).
\end{split}
\end{equation*}
It is strongest, in the sense that it implies any other system-property, as proved in Theorem~\ref{thm:01}. 
\end{example}

%\begin{remark}\rm
%Contracts composition is also presented in \cite{de2001interface, tripakis2011theory, benveniste2017synchronous} under certain definitions of relational interfaces. For example in \cite{tripakis2011theory}, the contract is syntactically defined as logical formalism which does not necessary take A-G style. In contrast,  \cite{benveniste2017synchronous} adopt the A-G style contract. Note a logical contract formula $C$ can be trivially expressed as a A-G contract $(true, C)$,  and conversely, an A-G contract $(A, G)$ is equivalent to a logical contract $\neg A \vee G$. Consider for example a component that performs division with inputs $x_1, x_2$, output $y$, and rejects a zero input as denominator. Then this contract can be written as, $C_{Div}:= x_2\neq 0\wedge y=x_1/x_2$, or equivalently in A-G style as, $(true, x_2\neq 0\wedge y=x_1/x_2)$.
%\end{remark}

%\begin{remark}\rm\label{rmk:rv1}
%In \cite{benveniste2007multiple}, a definition of elimination of a variable (or port) $p$ from a contract $(A,G)$ is given by $(\forall p A, \exists pG)$. Under A-G style contract, this is equivalent to $\exists p(A\Rightarrow G)$, which is the strongest system-property in the sense of~\ref{eqn:01}, where the equivalence is achieved from $\exists p(A\Rightarrow G)\equiv \exists p(\neg A\vee G)\equiv\exists p(\neg A) \vee \exists pG\equiv \neg(\forall pA)\vee \exists pG\equiv(\forall p A\Rightarrow \exists pG)$, in which the last equivalent expression corresponds to the contract $(\forall p A, \exists pG)$.
%\end{remark}

\begin{remark}\rm
Theorem~\ref{thm:01} shows that property composition for a modular system is essentially a QE problem. From this perspective, we introduce a unified QE-based compositional verification procedure containing only two verification steps. First, we derive the strongest system-property, denoted by $\phi_\text{ssp}$, according to \eqref{eqn:01} composing all the component contracts and their interconnection relation. After QE, $\phi_\text{ssp}$ is simplified to a quantifier-free formula with only the system-level input/output variables. Secondly, we check if the quantifier-free $\phi_\text{ssp}$ logically implies the postulated system property $\phi_\text{postl}$ that also contains only system-level input/output variables. The second step, although not part of Theorem~\ref{thm:01}, is yet another QE problem, that is to check if $\forall x_{m+1}\dots\forall x_n (\phi_\text{ssp}\Rightarrow \phi_\text{postl})$ reduces to $true$ or $false$. 
\end{remark}

\begin{remark}\rm
An inter-connected system is said to be well-posed if all its internal signals are uniquely defined given the external signals and the internal states \cite{vidyasagar1980well, doyle2013feedback}. We note that in the proposed framework of QE-based property composition, no such requirement of well-posedness is explicitly needed because the existential quantification over the internal variables already accounts for the existence of some values the internal signals can take, and so well-posedness is implicitly ensured through existential quantification.
\end{remark}

\subsection{\texttt{ReLIC} implementation and experimental results}
We have developed a prototype verifier \texttt{ReLIC} employing the aforementioned strategy, integrating \texttt{Redlog} (for QE-solving) with \texttt{AGREE} (for system and component specification). \texttt{ReLIC} abstracts the system architecture description in AADL and the contracts specification in \texttt{AGREE} annex, and formulates them  into a QE problem formatted for \texttt{Redlog} input. Figure~\ref{fig:AGREERedlogDF} shows interaction between the front-end of \texttt{AGREE} and the back-end solver \texttt{Redlog}. It also illustrates how this interaction completely bypasses \texttt{JKind}, replacing the aforementioned $N+1$ verification steps required under the original compositional verification approach of \texttt{AGREE} with our two steps  QE-based approach. Particularly in our approach, deriving the strongest system-property is unified and automatic, given the component contracts as atomic-properties. This feature is not seen from the current compositional reasoning tools.

\begin{figure}[!htb]
\centering
\includegraphics[scale=0.45]{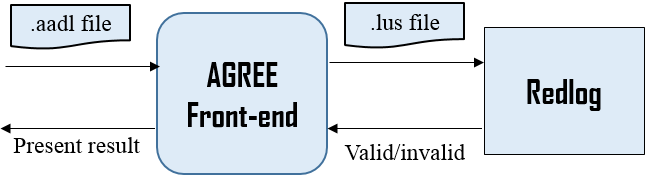}
\caption{Data flow between \texttt{AGREE} and \texttt{Redlog}.}
\label{fig:AGREERedlogDF}
\end{figure}

The illustrative model $S$ shown in Figure~\ref{fig:aModelExample} is taken from \cite{AGREEUserGuide}. It consists of three components, with their components and  interconnection relation architecture specified in AADL language. The properties for three components $A, B,$ and $C$, and the postulated system property system $S$ are as given by:
\begin{itemize}
\item $Property_A$: $In_A<20\Rightarrow Out_A<2\times In_A$;
\item $Property_B$: $In_B<20\Rightarrow Out_B<In_B +15$;
\item $Property_C$: $Out_C=In_C1+In_C2$;
\item $Property_{S}$: $In_S<10\Rightarrow Out_S<50$.
\end{itemize}

\begin{figure}[!htb]
\centering
\includegraphics[scale=0.3]{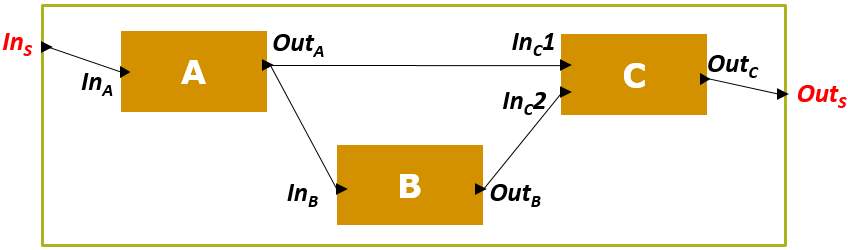}
\caption{Architecture of the example model modified from \cite{AGREEUserGuide}.}
\label{fig:aModelExample}%\vspace{-0.2in}
\end{figure}

\texttt{ReLIC} is executed by a new command ``Verify Composed Contract'' integrated into the \texttt{AGREE} front-end interface menu, as in Figure~\ref{fig:ReLICEnvironment}. The execution result of the above example in real domain is shown in Figure~\ref{fig:resultInReal}, while the derived strongest system-property $[In_S\le10\Rightarrow Out_S<4\times In_S+15]$ is shown in Figure~\ref{fig:sspInReal} displayed in the console window. In real domain, the postulated $Property_S$ is violated by, for example, the assignment value $(9,50)$ for $(In_S,Out_S)$. Figure~\ref{fig:ceInReal} shows one counterexample returned by \texttt{Redlog}. [Note the counterexample returned by \texttt{Redlog} may include constants denoted by indexed \textit{infinity} or \textit{epsilon}, where \textit{infinity} stands for positive and infinite value constant, and \textit{epsilon} stands for positive and infinitesimal value in the underlying field.]

In contrast, the postulated property is verified to be $true$ in the integer domain as shown in Figure~\ref{fig:resultInInt}. This is because the derived strongest system-property varies as the domain changes. Figure~\ref{fig:sspInInt} shows the derived strongest system-property in integer domain, in which the term $cong(p_1,p_2,p_3)$ represents a congruence with the non-parametric modulus given by the third argument. The strongest system-property of the integer domain, $[In_S\leq 10\Rightarrow Out_s\leq 4\times In_s+12]$ turns out to more strict (than the one in the real domain), resulting in the implication to the postulated $Property_S$. In both instances, the entire execution process terminates less than 1 second on a standard laptop.

\begin{figure}[!htb]
\centering
\includegraphics[scale=0.48]{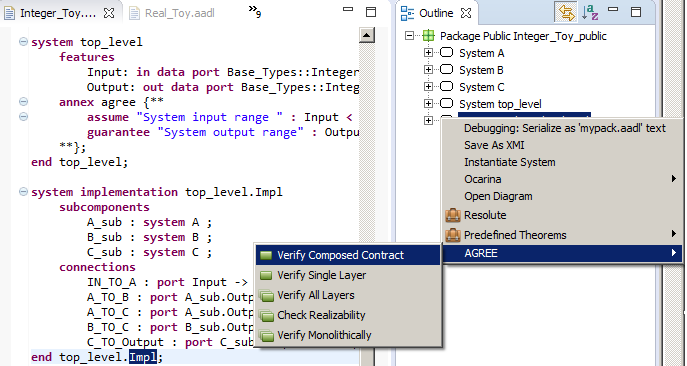}
\caption{Execution of \texttt{ReLIC} via \texttt{AGREE} command in OSATE2.}
\label{fig:ReLICEnvironment}
\end{figure}

\begin{figure}[!htb]
    \centering
    \begin{subfigure}[b]{0.48\textwidth}
            \centering
            \includegraphics[width=\textwidth]{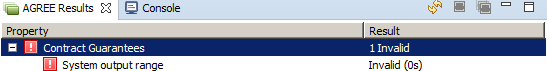}
            \caption{Verification result in real domain.}
	 \label{fig:resultInReal}
    \end{subfigure}
\begin{subfigure}[b]{0.48\textwidth}
            \centering
            \includegraphics[width=\textwidth]{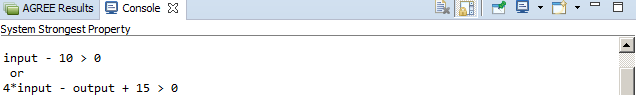}
	 \caption{The strongest system property derived in real domain.}
	 \label{fig:sspInReal}
    \end{subfigure}
\begin{subfigure}[b]{0.48\textwidth}
            \centering
            \includegraphics[width=\textwidth]{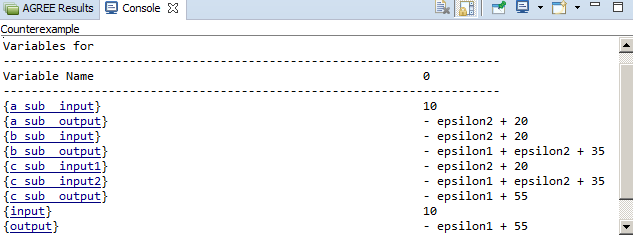}
	 \caption{Counterexample in real domain.}
	 \label{fig:ceInReal}
    \end{subfigure}
\begin{subfigure}[b]{0.48\textwidth}
            \centering
            \includegraphics[width=\textwidth]{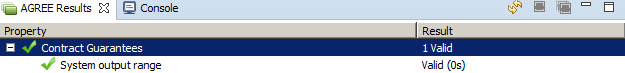}
	 \caption{Verification result in integer domain.}
	 \label{fig:resultInInt}
    \end{subfigure}
\begin{subfigure}[b]{0.48\textwidth}
            \centering
            \includegraphics[width=\textwidth]{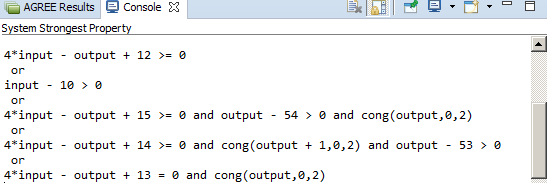}
	 \caption{The strongest system property derived in integer domain.}
	 \label{fig:sspInInt}
    \end{subfigure}
    \caption{\texttt{ReLIC} verification results on the illustrative example.}
\end{figure}

\section{Time-dependent property composition and \texttt{ReLIC} implementation}
\label{sec:ReLICforTD}
Complex systems often possess time-dependent features through components such as state-machine, counter, or PID controller. %While Definition~\ref{def:01}--\ref{def:03} apply also to time-dependent properties, a component atomic-property can be a predicate over its input/output variables at different time-steps. 
For performing property composition in the time-dependent, \eqref{eqn:01} cannot be used as is. This is observed in the example below.
\begin{example}\rm
\label{ex2}
Consider a {\em cascade} of two identical manufacturing components with bounded processing delay, where every input at time step $k$, departs from the component by time step $k+d$. For $i=1,2$, let $u_i{(k)},y_i(k)$ be the variables $u_i$ and $y_i$ at the $k^\text{th}$ time step. Then for the simple case of $d=1$, the two components can be formulated as $y_i(k)=u_i(k-1),i=1,2$ and their cascade connection by $y_1(k)=u_2(k)$. It is easy to see that the strongest system-property is ${y_2(k)=u_1(k-2)}$. Note this final formula includes the term $u_1(k-2)$ that was not provided by the given component property formulae, therefore the standard QE-based composition in \eqref{eqn:01} cannot be used directly to obtain ${y_2(k)=u_1(k-2)}$. Remarkably, by simply shifting each atomic-property by one time step, and composing these time-shifted replicas with the original atomic-properties  with all the internal variables $y_1(k)$, $y_1(k-1)$, and $y_2(k-2)$ existentially quantified: $\exists y_1(k)\exists y_1(k-1)\exists y_1(k-2)\big[\big(y_1(k)=u_1(k-1)\big)\wedge \big(y_2(k)=y_1(k-1)\big)\wedge \big(y_1(k-1)=u_1(k-2)\big)\wedge\big(y_2(k-1)=y_1(k-2)\big)\big]$, then upon QE, we do get the desired strongest system-property: $y_2(k)=u_1(k-2)$.
\end{example} 

\subsection{Approach to time-dependent property composition}
The example above shows that the time-dependent properties composition requires pre-processing regarding time-shift  before composition could happen even for the simplest cases. To be specific, the original atomic-properties may need to be time-shifted to match the (possibly) higher order of the system-property, where a system-property is a predicate over sequence of inputs and outputs. In time-dependent cases, component/system behavior at the current time step is governed by a set of predicates of its inputs and outputs over a finite history, which is captured by a set of difference equations/inequations over the component/system inputs and outputs. To formalize this time-shifting idea in the pre-processing, we introduce the notion of component/system order.
\begin{definition}\rm
For a component/system its {\em order} is defined as the difference of minimum and maximum time-shifts present in its governing difference equations/inequations predicates.
\end{definition}
\begin{remark}\rm
\label{rmk:3}
This definition of the order applies to both component and system. It is possible that the composed strongest system-property has a lower order than the component property. For instance, a system that has input $u$, internal variable $x$, and output $y$, possesses two atomic-properties $x(k)=-x(k-1)+u(k)$ and $y(k)=x(k-1)+x(k)$, each of which is of order 1 by our definition. But the strongest system-property $y(k)=u(k)$, obtained from variable substitution $y(k)=x(k-1)+x(k)=x(k-1)+\big(-x(k-1)+u(k)\big)=u(k)$, is of 0 order.
\end{remark}

This section begins by showing an example that adding time-shifted replicas of component atomic-properties before composition to yield the strongest system-property can be straightforward. One has to decide how many time-shifted replicas of each original atomic-property, and how many time-step to be shifted for each replica is sufficient and necessary. The key factor is the order $M_\text{sys}$ of the composed system. The following theorem presents a simple way to find an upper bound for the system order by summing up all its component orders.

\begin{theorem}\rm
\label{thm:02}
Given a system $S$ composed of $N$ multi-input-multi-output components, if the $i^\text{th}$ component is of order $M_i$, then the system order $M_\text{sys}$ of $S$ has an upper bound $\sum_{i=1}^N M_i$. 
\end{theorem}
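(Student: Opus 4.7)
The plan is to prove the bound constructively, by exhibiting how the strongest system-property at a reference time $k$ can be derived from a conjunction of finitely many time-shifted replicas of the component atomic-properties (as illustrated in Example~\ref{ex2}) and bounding the span of time shifts that survive the existential elimination of internal variables. First I would formalize the order: for each component $i$, its atomic-property $\phi_i$ referenced at time $k$ involves variables with time-shifts occupying an interval $[k-a_i,\,k-b_i]$ of length $M_i = a_i-b_i$. A time-shifted replica $\phi_i(k-s)$ simply translates this interval by $s$.

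Second, by the time-dependent extension of Theorem~\ref{thm:01}, the strongest system-property at reference time $k$ is obtained as
\[
\exists (\text{internals})\;\Bigl(\bigwedge_{i=1}^N \bigwedge_{s\in S_i}\phi_i(k-s)\wedge\bigwedge_{(x_p,x_q)\in C} (x_p = x_q)\Bigr),
\]
for some finite shift sets $S_i$. The crux is to argue that one can always choose the $S_i$'s so that the free (external) variables surviving the elimination lie within an interval of length at most $\sum_i M_i$. I would proceed by induction on $N$. The base case $N=1$ is immediate, with $S_1=\{0\}$ and $M_\text{sys}\le M_1$. For the inductive step, assume the composition of any $N-1$ components yields a system of order at most $\sum_{i=1}^{N-1}M_i$. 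Composing the resulting $(N-1)$-component ``super-component'' with the $N$-th component reduces to the two-component case, so it suffices to show that two components of orders $M'$ and $M_N$ compose to a system of order at most $M'+M_N$.

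Third, for the two-component case, I would analyze the worst-case interconnection (a cascade), where the output of component 1 at time shift $k-j$ feeds the input of component 2. To express the external output of component 2 at shift $k$ as a function of external inputs of component 1, one chains component 2's property (reaching back by $M_N$ to an internal variable) with a time-shifted replica of component 1's property (which must be shifted by $M_N$ to meet that internal variable at the right shift), thereby extending the overall shift span by exactly $M'$ more, for a total of $M'+M_N$. For parallel and feedback interconnections, I would argue that no chaining longer than this cascade worst case can arise: parallel composition shares the reference time and yields $\max(M',M_N)\le M'+M_N$, while feedback loops close on themselves and hence do not accumulate beyond a single pass through each component.

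The main obstacle I anticipate is the feedback case, because a naive unrolling of a feedback loop suggests arbitrarily many shifted replicas may be needed. I would address this by showing that for feedback, the existential quantification of the internal loop variables absorbs the repeated substitutions: any dependency longer than $M'+M_N$ could be factored through internal variables already captured by the chosen $S_1,S_2$, so additional replicas contribute nothing new to the projection onto external variables. A rigorous statement of this ``closure under shift'' would complete the proof, together with the observation (cf.~Remark~\ref{rmk:3}) that the bound is not in general tight, since internal variable substitutions may cancel and reduce the order further.
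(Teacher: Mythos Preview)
Your approach is genuinely different from the paper's, and the gap you yourself flag in the feedback case is real and not easily repaired within your framework.

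The paper does \emph{not} reason about time-shift intervals or do a case split on cascade/parallel/feedback. Instead it passes to a state-space representation: each component of order $M_i$ is rewritten (after introducing slack variables to turn inequations into equations) in the standard form $\mathbf{x}_i(k{+}1)=\mathbf{f}_i(\mathbf{x}_i(k),\mathbf{u}_i(k))$, $\mathbf{y}_i(k)=\mathbf{g}_i(\mathbf{x}_i(k),\mathbf{u}_i(k))$, with $M_i$ state variables. Stacking the representations of two components gives a joint state vector of size $M_1+M_2$. The crucial observation is that wiring an output $y_q$ into an input $u_p$ amounts to the algebraic substitution $u_p(k)=g_q(\mathbf{x}(k),\mathbf{u}'(k))$ in $\mathbf{f}$ and $\mathbf{g}$, which touches only the input arguments and therefore \emph{never introduces a new state variable}. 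Iterating over all connections---regardless of whether they form a cascade, a parallel junction, or a feedback loop---leaves the state count at $M_1+M_2$, and the theorem follows by induction on $N$. The uniformity here is the point: the proof never needs to distinguish connection topologies, so feedback is handled for free.

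Your proposal, by contrast, tries to bound the span of surviving time shifts directly. This works cleanly for cascade (your chaining argument is essentially correct) and for parallel (trivially $\max\le\sum$), but your feedback argument is only a hope. Saying that ``feedback loops close on themselves and hence do not accumulate beyond a single pass'' and that longer dependencies ``could be factored through internal variables already captured'' is exactly what must be \emph{proved}, and proving it at the level of time-shifted replicas and existential elimination amounts to showing that the elimination ideal of the infinite family $\{\phi_i(k-s):s\ge 0\}$ restricted to external variables is already generated by finitely many shifts spanning at most $M_1+M_2$. That is a nontrivial algebraic fact, and the natural way to establish it is precisely to exhibit a finite state vector that carries all the memory---i.e., to pass to state space, which is what the paper does. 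Without that, your ``closure under shift'' step is a restatement of the theorem rather than a proof of it.
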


\begin{proof}
Without loss of generality, choose any two connected components of $S$ with orders $M_1$ and $M_2$ respectively. Each is given a set of atomic-properties in the general form of nonlinear difference equations/inequations $f(\cdot)\sim 0$, where $\sim\in\{=,\ge,>\}$. For each inequation, a slack variable $u_f$ is introduced so that the {\em inequation} is rewritten to an equivalent form as $f(\cdot)-u_f=0\wedge u_f\sim0$. This additional inequation $u_f\sim0$ is of 0 order therefore does not contribution to the system order. 

For component $i_{i\in\{1,2\}}$ of the two chosen components, the set of its difference equations can be translated to an equivalent state-space representation \cite{fadali2012digital}, in which the number of the state variables equals its component order. The general form of the state-space representation for component $i$ is:
\begin{align*}
&{\bf x_i}(k+1)={\bf f_i}\big({\bf x_i}(k),{\bf u_i}(k)\big),\\
&{\bf y_i}(k)={\bf g_i}\big({\bf x_i}(k),{\bf u_i}(k)\big),
\end{align*}
where vectors ${\bf u_i}$ (size: $N_i\times1$), ${\bf x_i}$  (size: $M_i\times1$), and ${\bf y_i}$  $(size: O_i\times1)$ are the input (including the slack variables), state, and output variable vectors respectively. (Note that all the added constraints over the slack variables are of order 0 thus do not alter the component order.)  ${\bf f_i}(\cdot)$ (size: $M_i\times1$) and ${\bf g_i}(\cdot)$ (size: $O_i\times 1$) are vectors of functions. 

One can simply stack the two state space representations into a single one:
\begin{align*}
&\begin{bmatrix}
    {\bf x_1}(k+1)\\
    {\bf x_2}(k+1)
 \end{bmatrix}=
\begin{bmatrix}
    {\bf f_1}\big({\bf x_1}(k),{\bf u_1}(k)\big) \\
    {\bf f_2}\big({\bf x_2}(k),{\bf u_2}(k)\big)
  \end{bmatrix},\\
&\begin{bmatrix}
    {\bf y_1}(k)\\
    {\bf y_2}(k)
 \end{bmatrix}=
\begin{bmatrix}
    {\bf g_1}\big({\bf x_1}(k),{\bf u_1}(k)\big) \\
    {\bf g_2}\big({\bf x_2}(k),{\bf u_2}(k)\big)
  \end{bmatrix}.
\end{align*}
We show that when the two components get connected, the number of total states does not increase, and this is the basis that theorem~\ref{thm:02} is established upon. The connection of two components can be formulated as ${\bf u_c}={\bf y_c}$ where ${\bf u_c}$ (resp. ${\bf y_c}$) is an input (resp. output) vector from the union of the two components' inputs (resp. outputs). Note that the proper connection allows one input to connect with no more than one output but one output to connect with arbitrary number of inputs. The system of more than two components can be obtained by iteratively connecting one component to the existing composition at a time. It is sufficient to show that such iteration does not introduce any additional state variable. For notation compactness, we adopt the following state-space representation for the two-component composition:
\begin{align*}
&{\bf x}(k+1)={\bf f}\big({\bf x}(k),{\bf u}(k)\big),\\
&{\bf y}(k)={\bf g}\big({\bf x}(k),{\bf u}(k)\big),
\end{align*}
which possesses $M_1+M_2$ state variables. Let $(u_p, y_q)$ be a single connection of the connected system, where $u_p\in {\bf u}$, $y_q\in {\bf y}$, and $u_p$, $y_q$ are from different components. Let ${\bf u'}$ (resp. ${\bf y'}$) be the vector after removing $u_p$ (resp. $y_q$) from ${\bf u}$ (resp. ${\bf y}$). We then have $u_p(k)=y_q(k)=g_q\big({\bf x}(k),{\bf u}(k)\big) =g_q\big({\bf x}(k),{\bf u'}(k)\big)$ with $g_q\in {\bf g}$, as a result, the state-space representation is updated to:
\begin{align*}
&{\bf x}(k+1)={\bf f}\Big({\bf x}(k), {\bf u'}(k), g_q\big({\bf x}(k),{\bf u'}(k)\big)\Big),\\
&{\bf y}(k)={\bf g}\Big({\bf x}(k),{\bf u'}(k), g_q\big({\bf x}(k),{\bf u'}(k)\big)\Big).
\end{align*}
It is easy to see that the new form possesses input variables from ${\bf u'}$ ($u_p$ is internalized), output variables from ${\bf y}$ (or ${\bf y}\setminus\{y_q\}$ if $y_q$ also is internalized), while the state set remains ${\bf x}$. (A lower order state-space representation may be possible after further simplification.) As a result, the connected system has an upper bound $M_1+M_2$ on its overall order. The theorem thus follows by iteratively connecting the rest components to the composition.
\end{proof}

In many cases, a component is specified with not one but a {\em set} of properties (see for instance in Figure~\ref{fig:vehicleMdl} the component CNTRL of the vehicle benchmark). Such a component shall be treated as a set of single-property sub-components in our composition analysis, where their connections are implied by the common variables. Thus we introduce the following corollary:
\begin{corollary}\rm
\label{crlr:01}
The order of a component is bounded by the sum of all its atomic-property orders. The system order stated in Theorem~\ref{thm:02} is bounded by the sum of all the atomic-property orders of all its components. 
\end{corollary}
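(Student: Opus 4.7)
The plan is to apply Theorem~\ref{thm:02} twice, at two levels of granularity, treating each atomic-property as a degenerate single-property sub-component. First I would formalize the idea, already hinted at in the paragraph preceding the corollary, that a component carrying a set of atomic-properties $\{\phi_1,\dots,\phi_L\}$ can be re-expressed as an interconnection of $L$ \emph{virtual} sub-components, where the $j^\text{th}$ sub-component carries exactly the single atomic-property $\phi_j$, and the inter-sub-component connections are precisely the equalities over the shared (common) variables appearing in more than one $\phi_j$. Since the predicate $\bigwedge_{j=1}^L \phi_j$ together with the identity connections over shared variables is logically equivalent to the original set of atomic-properties, this rewriting does not change the component's behavior.

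Next, by the definition of order, a sub-component that carries the single atomic-property $\phi_j$ has order equal to the order of $\phi_j$ itself, since that single difference equation/inequation is its only governing predicate. Applying Theorem~\ref{thm:02} to the interconnection of these $L$ single-property sub-components then yields that the order of the original component is upper-bounded by $\sum_{j=1}^L \mathrm{ord}(\phi_j)$. This establishes the first statement of the corollary.

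For the second statement, I would chain the bounds. Let component $i$ have atomic-properties $\{\phi_{i,1},\dots,\phi_{i,L_i}\}$ of orders $\{m_{i,1},\dots,m_{i,L_i}\}$. The first statement just proved gives $M_i \le \sum_{j=1}^{L_i} m_{i,j}$. Substituting this bound into the conclusion of Theorem~\ref{thm:02} yields
\begin{equation*}
M_\text{sys} \;\le\; \sum_{i=1}^N M_i \;\le\; \sum_{i=1}^N \sum_{j=1}^{L_i} m_{i,j},
\end{equation*}
which is exactly the double-summation of all atomic-property orders of all components, as required.

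The main obstacle I anticipate is justifying the virtual-decomposition step rigorously: one has to argue that the state-space construction used inside the proof of Theorem~\ref{thm:02} (stacking sub-state vectors, then internalizing inputs coincident with outputs via $u_p(k)=y_q(k)$) actually goes through when the ``connections'' are not physical port equalities but merely variable-identification across atomic-properties that already reference the same symbol. I would handle this by observing that one may freshly rename the occurrences of each shared variable inside each $\phi_j$ to distinct symbols, then reintroduce the identifications as explicit equality connections; the resulting interconnected system is syntactically in the form Theorem~\ref{thm:02} accepts, and semantically equivalent to the original component, so the bound transfers without alteration.
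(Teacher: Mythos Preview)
Your proposal is correct and follows essentially the same approach the paper takes: the paper does not give a separate formal proof of the corollary but simply notes, in the paragraph immediately preceding it, that a component with multiple atomic-properties ``shall be treated as a set of single-property sub-components \dots\ where their connections are implied by the common variables,'' which is exactly your virtual-decomposition-plus-Theorem~\ref{thm:02} argument. Your treatment is in fact more careful than the paper's sketch, particularly the renaming-and-reintroducing-equalities step that puts the decomposition into the syntactic form Theorem~\ref{thm:02} requires.
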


Consider a system $S$ with variables $X:=\{x_1,\dots,x_n\}$, in which $X_\text{int}:=\{x_1,\dots,x_{m|m\le n}\}\subseteq X$ is the set of internal variables, and $C:=\{(x_p,x_q)~|~x_p$ and $x_q$ are variables of connected ports$\}$ be the set of interconnection relation. Let $x{(k)}$ denote the variable at the $k^\text{th}$ time step, especially with $x{(0)}$ being at the initial step. Let $X{([s,t])}:=\{x{(k)}\vert x\in X, k\in[s,t]\}$ be the variables over the time interval $[s,t]_{s,t\in \mathbb{Z}_{\ge0}, s\leq t}$. Given the composed system order $M_\text{sys}$, the $i^\text{th}$ atomic-property of the components can be replicated $M_\text{sys}-M_{i}$ times, where $M_{i}$ is the order of the $i^\text{th}$ atomic-property, and the $j^\text{th}$ replica shifted $j$ time-steps $(j =1,\dots,M_\text{sys}-M_{i})$ to obtain the set of constraints over the variables $X{([k,k+M_\text{sys}])}$ of the system. Then we give the following theorem:

\begin{theorem}\rm
\label{thm:03}
The original and time-shifted replicas of atomic-properties of all the components over $X{([k,k\text{+}M_\text{sys}])}$, together with the original and time-shifted replicas of the interconnection relation, can be composed to infer the strongest system-property of their system $S$ as follows:
\begin{multline}
\label{eqn:02}
\Exists X_\text{int}{([k,k\text{+}M_\text{sys}])}\Big(\\\bigwedge \text{all component predicates over }X{([k, k\text{+}M_\text{sys}])}\\\bigwedge_{(x_p,x_q)\in C,j\in[k,k\text{+}M_{sys}]}(x_p(j)=x_q(j))\Big),
\end{multline}

Similarly %for a system of order $M_\text{sys}$, initial conditions over the first $M_\text{sys}$ steps ($0,\dots, M_\text{sys}\text{-}1$) suffice. So 
the system-initial-condition is obtained using:
\begin{multline}
\label{eqn:03}
\Exists X_\text{int}{([0,M_\text{sys}\text{-}1])}\Big(\\\bigwedge \text{all component predicates over } X{([0,M_\text{sys}\text{-}1])}\\\bigwedge_{(x_p,x_q)\in C,j\in[0,M_{sys}\text{-}1]}(x_p(j)=x_q(j))\Big).
\end{multline}
\end{theorem}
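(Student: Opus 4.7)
The plan is to mirror the proof of Theorem~\ref{thm:01} applied to a time-unrolled version of the system. Specifically, I would first treat each original atomic-property and each of its time-shifted replicas as a new atomic-property, and treat $X([k, k+M_\text{sys}])$ as the expanded set of system variables. Under this reinterpretation, the conjunction inside \eqref{eqn:02} has exactly the structure of the formula in Theorem~\ref{thm:01} (a conjunction of atomic predicates plus equalities for interconnections, existentially quantified over internal variables), so the argument of Theorem~\ref{thm:01} should carry over almost verbatim.

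First I would verify that \eqref{eqn:02} is indeed a system-property in the sense of Definition~\ref{def:02}. Each time-shifted replica is a valid consequence of the corresponding atomic-property, because a component predicate that holds at every time step in particular holds at each shifted step in $[k, k+M_\text{sys}]$; likewise each interconnection equality $x_p=x_q$ holds at every step. Thus the big conjunction inside \eqref{eqn:02} is implied by the component atomic-properties together with the interconnection relation, and existentially quantifying the internal variables $X_\text{int}([k, k+M_\text{sys}])$ preserves this implication while projecting onto only the external system variables over the window.

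Second, I would establish the strongest-ness via the same projection argument used in Theorem~\ref{thm:01}: for any valid assignment $\vec v_\text{sys}$ of external variables over $[k, k+M_\text{sys}]$ satisfying \eqref{eqn:02}, existential witnessing produces a full assignment $\vec v$ on $X([k, k+M_\text{sys}])$ that satisfies every component predicate and every interconnection equality in the window. Any other system-property $\phi_\text{sys}$ must by Definition~\ref{def:02} be implied by the underlying conjunction, hence must hold at $\vec v_\text{sys}$, giving \eqref{eqn:02}~$\Rightarrow \phi_\text{sys}$. The crucial justification for stopping the unrolling at horizon $M_\text{sys}$ is Theorem~\ref{thm:02} together with Corollary~\ref{crlr:01}: the composed system has order at most $M_\text{sys}$, so every variable on which the strongest system-property can non-trivially depend already lies inside $X([k, k+M_\text{sys}])$, and any further time-shifts would merely add redundant (logically implied) conjuncts without sharpening the projection. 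The initial-condition formula \eqref{eqn:03} is handled by the identical argument restricted to the initial window $[0, M_\text{sys}-1]$, which is exactly the horizon over which the components' initial constraints are active.

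The main obstacle I anticipate is not the Theorem~\ref{thm:01}-style projection step, but rather rigorously arguing that $M_\text{sys}$ time-shifts are both \emph{sufficient} and that no ``external'' dependence on variables outside $X([k, k+M_\text{sys}])$ is lost. Sufficiency follows from the state-space construction in the proof of Theorem~\ref{thm:02}: every sample $x(k{+}j)$ for $j>M_\text{sys}$ is a function of $X([k, k+M_\text{sys}])$ and future inputs, so no additional independent constraint can be obtained by extending the window. A dependence on variables prior to $k$ would likewise correspond to a state-space realization of order exceeding $M_\text{sys}$, contradicting Theorem~\ref{thm:02}. Once this ``window sufficiency'' lemma is in hand, \eqref{eqn:02} and \eqref{eqn:03} both follow as direct time-dependent analogues of Theorem~\ref{thm:01}.
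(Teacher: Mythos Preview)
Your proposal is correct and follows essentially the same approach as the paper: reduce the time-dependent case to Theorem~\ref{thm:01} by treating the time-shifted replicas over $X([k,k+M_\text{sys}])$ as the new collection of atomic-properties and interconnection equalities, and invoke the system-order bound (Theorem~\ref{thm:02}/Corollary~\ref{crlr:01}) to justify that the window $[k,k+M_\text{sys}]$ suffices. The paper's own proof is terser---it simply notes that any system-property of higher order can be reduced to one of order $M_\text{sys}$ and then appeals to Theorem~\ref{thm:01} with variables replaced by their $M_\text{sys}$-step versions---so your more explicit treatment of window sufficiency is a welcome elaboration rather than a departure.
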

\begin{proof}
We only show the establishment of the strongest system-property; the establishment of the system-initial-condition is similar. Since the system order is $M_\text{sys}$, any system-property of order more than $M_\text{sys}$ can be reduced to an equivalent system-property of order equal to $M_\text{sys}$. The proof then follows by simply replacing all single-step variables in the proof of Theorem~\ref{thm:01} with their corresponding $M_\text{sys}$-step versions.
\end{proof}

Once the strongest system-property and system-initial-condition are encoded using \eqref{eqn:02} and \eqref{eqn:03}, verification of a postulated system-level property can be done using \textit{induction-based proof}. 

\begin{example}\rm
Consider three component atomic-properties $z(k)=y(k)+x(k)$, $y(k)=u(k-1)$, and $x(k)=w(k-1)$, where $x$ and $y$ are internal variables to be eliminated. According to Theorem~\ref{thm:02}, the system order has an upper bounded as the summation of the three components' orders: $0+1+1 = 2$. %On the other hand, it is easy to see through substitution that the strongest system-property over $z$, $u$, and $w$ is $z(k)=u(k-1)+w(k-1)$, which is only of order 1. 
The inference of the strongest system-property requires first shifting each atomic-property by the difference between its order and the system order upper bound ($2,1,1$ resp.), and then combining them using \eqref{eqn:02}:
\begin{multline*}
\exists x{(k-1)}\exists x{(k)}\exists x{(k+1)}\exists y{(k-1)}\exists y{(k)}\exists y{(k+1)} \\\Big(\big(z{(k-1)}=y{(k-1)}+x{(k-1)}\big)\wedge \big(z{(k)}=y{(k)}+x{(k)}\big)\\\wedge\big(z{(k+1)}=y{(k+1)}+x{(k+1)}\big)\wedge \big(y{(k)}=u{(k-1)}\big)\wedge\\\big(y{(k+1)}=u{(k)}\big)\wedge \big(x{(k)}=w{(k-1)}\big)\wedge \big(x{(k+1)}=w{(k)}\big)\Big).
\end{multline*}
Upon QE we obtain:
\begin{equation*}
\big(z{(k)}=u{(k-1)}+w{(k-1)}\big)\wedge\big(z{(k+1)}=u{(k)}+w{(k)}\big).
\end{equation*}

\end{example}

Note in the above example, $z{(k+1)}=u{(k)}+w{(k)}$ is a replica of $z{(k)}=u{(k-1)}+w{(k-1)}$ with one step-time shift, and hence redundant, meaning that the property can be simplified to the latter. This is formalized in the following remark.
\begin{remark}\rm
In general, when the system order $M_\text{sys}$ is less than the upper bound $\sum_{i=1}^N M_i$ given in Theorem~\ref{thm:02}, the QE result of \eqref{eqn:02} will contain $\sum_{i=1}^N M_i-M_\text{sys}$ copies of redundant expressions in conjunction. These redundant expressions are replicas of some simpler expression with time-shifts but provide no extra information about the system. Therefore, often we could eliminate them by identifying them in the conjunctive normal form (CNF) of the QE result.
\end{remark}

\begin{example}\rm
Recall Example~\ref{ex2} with the component atomic-properties $y_1(k)= u_1(k-1)$ for $k\ge 1$, and $y_2(k)=y_1(k-1)$ for $k\ge1$, augmented with atomic-initial-conditions $y_2{(0)}=0$ and  $y_1{(0)}=1$. Then in this case, \eqref{eqn:03} is encoded as:
\begin{multline*}
\exists y_1(0)\exists y_1(1)\Big(\big(y_2{(0)}=0\big)\wedge\big(y_2{(1)}=y_1{(0)}\big)\wedge\big(y_1{(0)}=1\big)\\\wedge\big(y_1{(1)}=u_1{(0)}\big)\Big),
\end{multline*}
which is equivalent to $\big(y_2{(0)}=0\big)\wedge\big(y_2{(1)}=1\big)$ through QE.
\end{example}

%\begin{remark}
 %Whether or not an equivalent quantifier-free version can be inferred depends on the generality of the logic. For example, for the case of Propositional Temporal Logic (PTL) \citep{gabbay1980temporal}, that is subsumed by FOL (first order logic), it is known that FOL over reals admits quantifier elimination \citep{tarski1998decision}. On the other hand, FOLTL is an instance of second-order logic (SOL), which does not admit quantifier-elimination (projection of a Borel set, that can be defined in the SOL of reals, is not a Borel set in general \citep{cohn2013measure}). 

%\end{remark}

\subsection{\texttt{ReLIC} implementation and experimental results}
Algorithm~\ref{alg:01} presents the overall algorithm of the time-dependent property composition and postulated property verification. $\Phi_{\text{all}}$ in Line 1 is the set of all constraints over $X$ obtained by the union of component atomic-properties and the interconnection relation. In line 2, the system order $M_\text{sys}$ is assigned with its upper bound according to Theorem~\ref{thm:02} and Corollary~\ref{crlr:01}. Line 3 collects all the constraints over $X{([k,k+M_\text{sys}])}$, denoted $\Phi_\text{all}{([k,k+M_\text{sys}])}$, comprising the set of all the needed time-shifted replicas of atomic-properties in $\Phi_\text{all}$. Line 4 formulates the property composition as $f_\text{ssp}$ over $X{([k,k+M_\text{sys}])}$, by internalizing $X_\text{int}{([k,k+M_\text{sys}])}$ in $\Phi_\text{all}{([k,k+M_\text{sys}])}$ as in \eqref{eqn:02}. \texttt{Redlog} then performs QE on $f_\text{ssp}$ to obtain the strongest system-property $\phi_\text{ssp}$ in line 5. The system-initial-condition $I_\text{sys}$ is initialized to the default value $true$ in line 6. If the component atomic-initial-condition set is non-empty, the quantifier-free equivalence of the  system-initial-condition $I_\text{sys}$ is derived based on \eqref{eqn:03} in lines 8-10. Line 12 model checks whether the strongest system-property $\phi_\text{ssp}$ implies the given postulated property $\phi_\text{postl}$ using $k$-induction by QE-integrated \texttt{JKind}, the result, $result$, of which could be $false$, $true$, or $unknown$. If $result$ is $false$, a counterexample $ce$ will be generated. This inductive proof is carried out entirely at the system-level, without having to go back to the component-level, as that is the desired intent of a compositional reasoning and verification approach.

\begin{algorithm}[tbh]
\SetAlgoNoLine
\KwIn{System $S$, set of all variables $X:=\{x_1,\dots,x_n\}$, set of internal variables $X_\text{int}:=\{x_1,\dots,x_{m\le n}\}$,  set of interconnection relations $C$, set of component atomic-properties $\Phi_\text{comp}$, set of atomic-initial-conditions $I_\text{comp}$, postulated property $\phi_\text{postl}$ over system input/output variables $X\setminus X_\text{int}$.}
$\Phi_\text{all}\gets \Phi_\text{comp}\cup\{(x_p^k=x_q^k)\vert (x_p,x_q)\in C\}$\;
$M_\text{sys}\gets \Sigma_{\phi\in\Phi_\text{comp}}$ Order$(\phi)$\;
$\Phi_\text{all}{([k,k+M_\text{sys}])}\gets \{\text{Shift}(\phi, i)\vert \phi\in\Phi_\text{all}, i\in[0,M_\text{sys}-\text{Order}(\phi)]\}$\;
$f_\text{ssp}\gets\text{Compose}\big(X{([k,k+M_\text{sys}])}$, $X_\text{int}{([k,k+M_\text{sys}])},\Phi_\text{all}{([k,k+M_\text{sys}])}\big)$\;
$\phi_\text{ssp}\gets$ Redlog$(f_\text{ssp})$\;
$I_\text{sys}\gets true$\;
\If {$I_\text{comp}\neq \emptyset$}
{
  $I_\text{all}{([0,M_\text{sys}-1])}\gets I_\text{comp}\cup \Phi_\text{all}{([0,M_\text{sys}-1])}$\;
  $f_\text{sys}\gets\text{Compose}\big(X{([0,M_\text{sys}-1])}$, $X_\text{int}{([0,M_\text{sys}-1])},I_\text{all}{([0,M_\text{sys}-1])}\big)$\;
  $I_\text{sys}\gets$ Redlog$(f_\text{sys})$\;
}
$(result,ce)\gets$ $k$-induction$(\phi_\text{ssp},I_\text{sys},\phi_\text{postl},K)$\;
\KwOut{$result,ce$}
\caption{QE-based compositional reasoning and verification for time-dependent properties.}
\label{alg:01}
\end{algorithm}

We enhanced our prototype verifier \texttt{ReLIC} for the time-dependent property composition of finite-order. Figure~\ref{fig:ReLICArch} shows the architecture of enhanced \texttt{ReLIC}. It use the front-end infrastructure of the \texttt{AGREE} tool for editing and parsing the AADL models and its output console for result display. The ``Property Composer'' module derives the strongest system-property based on Theorem~\ref{thm:02}, Corollary~\ref{crlr:01}, and Theorem~\ref{thm:03}. Next, the ``Induction Verifier'' module uses the derived strongest system-property to perform inductive proof of a postulated property. Both the composer and verifier modules use \texttt{Redlog} as the back-end solver.
\begin{figure}[!htb]
\centering
\includegraphics[scale=0.32]{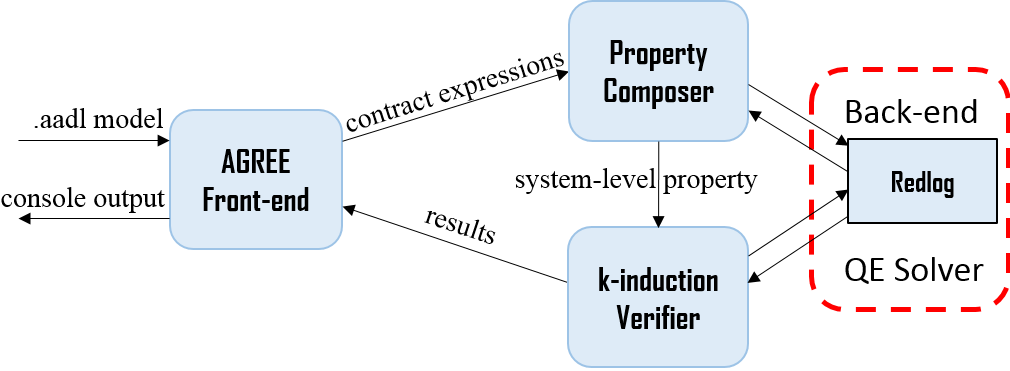}
\caption{ReLIC's Architecture}
\label{fig:ReLICArch}
\end{figure}

We tested our prototype verifier on a vehicle benchmark shown in Figure~\ref{fig:vehicleDynamics}. The benchmark model has a PID control ``Speed\_Control'' (CNTRL) component and a vehicle ``Throttle'' (THROT) component, interconnected in a feedback loop. The ``Speed\_Control'' component's dynamics is specified by a set of difference equations over its input and output as well as certain state variables (Figure~\ref{fig:CNTRLSpec}), whereas the ``Throttle'' component's dynamics is specified by a difference equation over only its input and output variables (Figure~\ref{fig:THROTSpec}). In \texttt{AGREE} annex notation, ${\bf prev}(\cdot)$ is the delay operator with the first argument denoting the delayed variable and the second argument denoting its initial value. For example, $u=0.2*e+0.2*{\bf prev}(e, 0.0)$ in Figure~\ref{fig:CNTRLSpec} can be rewritten as the equivalent form $(u(k) = 0.2*e(k) + 0.2*e(k-1)\big) \wedge \big(e(0)=0.0\big)$. For clarity and consistency, we use the later notation in the following illustration.

%\vspace{-0.1in}
\begin{figure}[!htb]
    \centering
    \begin{subfigure}[b]{0.49\textwidth}
            \centering
            \includegraphics[width=\textwidth]{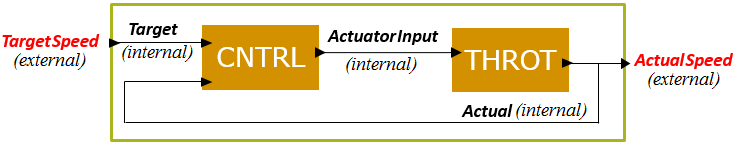}
            \caption{A vehicle model and its components dynamics.}
	 \label{fig:vehicleDynamics}
    \end{subfigure}
\begin{subfigure}[b]{0.45\textwidth}
            \centering
            \includegraphics[width=\textwidth]{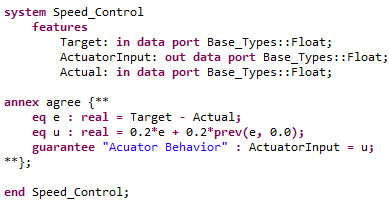}
	 \caption{Specification of CNTRL component.}
	 \label{fig:CNTRLSpec}
    \end{subfigure}
\begin{subfigure}[b]{0.45\textwidth}
            \centering
            \includegraphics[width=\textwidth]{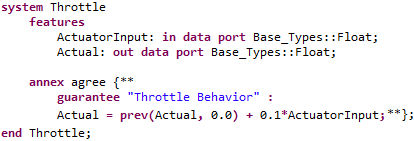}
	 \caption{Specification of THROT component.}
	 \label{fig:THROTSpec}
    \end{subfigure}
    \caption{A vehicle model, modified from \cite{AGREEUserGuide}.}
    \label{fig:vehicleMdl}
\end{figure}

There are two FOL expressions in total, namely, 
\begin{equation*}
u(k)=0.2*e(k) + 0.2*e(k-1),\text{ and }
\end{equation*}
\begin{equation*}
Actual(k)=Actual(k-1)+0.1*ActuatorInput(k),
\end{equation*}
The system-level postulated property is: 
\begin{equation*}
constTargetSpeed \Rightarrow (ActualSpeed < 1.0),
\end{equation*}
where $constTargetSpeed$ is a Boolean variable whose value indicates if the system target speed is set to constant 1.0 (see its defining expression in Figure~\ref{fig:ReLICOutput}). The \texttt{ReLIC} derived strongest system-property (based on \eqref{eqn:02} and \eqref{eqn:03} in Theorem~\ref{thm:03}) contains the system-level difference equation as well as the system-initial-condition over system input $TargetSpeed$ and system output $ActualSpeed$. The naive upper bound of the composed system order given by Theorem~\ref{thm:02} is $3$. However, the strongest system-property order is only 1 by observation, due to the inherent components' parallelism, as also computed by our approach (see the encoded expressions and the console output in Figure~\ref{fig:ReLICOutput}).

\begin{figure}[!htb]
\centering
\includegraphics[scale=0.56]{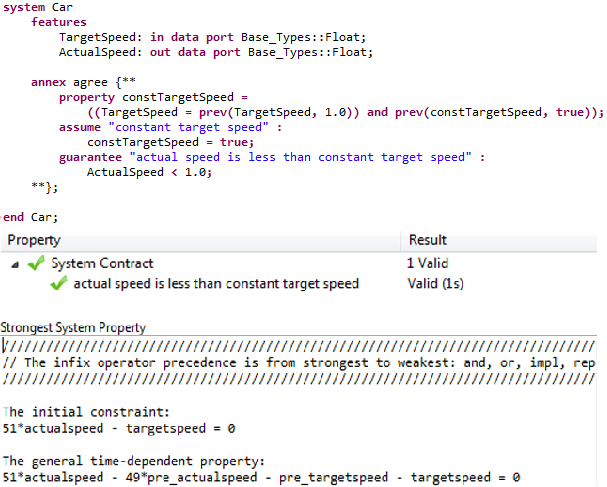}
\caption{\texttt{ReLIC} verification output (case-insensitive) on the vehicle model.}
\label{fig:ReLICOutput}
\end{figure}

\begin{multline*}
51*ActualSpeed(k)-49*ActualSpeed(k-1)\\-TargetSpeed(k-1)-TargetSpeed(k)=0,
\end{multline*} with the initial condition:
\begin{equation*}
51*ActualSpeed(0)-TargetSpeed(0)=0
\end{equation*}
We then use inductive proof with $k=1$ to verify the postulated property, where the base step is:
\begin{multline*}
(51*ActualSpeed(0)-TargetSpeed(0)=0)\Rightarrow\\\big((TargetSpeed(0)=1) \Rightarrow (ActualSpeed(0)<1)\big),
\end{multline*}
and the inductive step is
\begin{multline*}
\big((51*ActualSpeed(k)\text{-}49*ActualSpeed(k\text{-}1)\text{-}\\TargetSpeed(k\text{-}1)\text{-}TargetSpeed(k)=0)\wedge\\(TargetSpeed(k-1)=1) \wedge(ActualSpeed(k-1)<1)\big)\\\Rightarrow\big((TargetSpeed(k)=TargetSpeed(k-1))\Rightarrow\\(ActualSpeed(k)<1)\big).\end{multline*}

\section{Example applications of our QE-based strongest system-property computation approach}
\label{sec:applications}
At Honeywell Aerospace, scientists/engineers have developed the Honeywell Integrated Lifecycle Tools \& Environment (\texttt{HiLiTE}) \cite{bhatt2010towards, ren2016improving} tool suite for the automated verification of large and complex avionics systems developed using MATLAB Simulink/Stateflow. ``Range propagation'' is among its core static analysis functionalities that enables higher V\&V goals, by taking model input ranges and propagating those through each individual block to the model outputs. The propagated internal/output ranges can be used to identify potential model defects (e.g., checking if the range of the denominator port of a division block contains 0) or to guide automatic test generation. One of the common challenges in range propagation is the so-called \textit{dependency problem} \cite{moore2009introduction}, where if multiple inputs of a block are correlated (directly or indirectly) then the traditional output range calculation, that fails to capture the input dependency, may introduce undue over-approximation.

\begin{example}\rm
\label{ex4}
Figure~\ref{fig:AbsModel} shows a Simulink model implementation of the absolute value function using atomic blocks. Given input range $[-5, 5]$, its easy to see that the three inputs of the ``switch'' block range over $[-5,5]$, $[-5, 5]$, and $\{false, true\}$ respectively, after propagating through ``gain'' and ``lessEq'' blocks. Without a way of capturing the upstream structure that all three inputs of the switch block trace back to the same system input, a na\"{\i}ve way of range propagation through ``switch'' block will result in a conservative output range $[-5,5]$. 
\end{example}

\begin{figure}[!htb]
\centering
\includegraphics[scale=0.55]{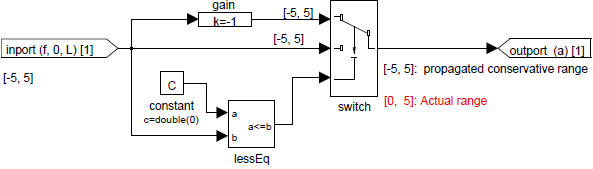}
\caption{Simulink model for absolute value function using atomic blocks.}
\label{fig:AbsModel}
\end{figure}

Using our QE-based notion, the strongest system-property provides the system-level input-to-output relationship as follows:
\begin{multline}
\label{eqn:04}
relation_{(out_{inport}, out_{switch})}:= \\\exists out_{gain} \exists out_{lessEq}\big(f_{gain}\wedge f_{lessEq}\wedge f_{switch}\big),    
\end{multline}
where $f_{gain}:=out_{gain}=-1\times out_{inport}$, $f_{lessEq}:=out_{lessEq}=0\le out_{inport}$, and $f_{switch}:= (out_{lessEq}\Rightarrow out_{switch}=out_{inport}) \wedge (\neg out_{lessEq}\Rightarrow out_{switch}=out_{gain})$ are the atomic-property predicates of the components ``gain', ``lessEq'', and ``switch'' respectively. Further, the precise range of $out_{switch}$ can be inferred using another QE-based strongest system-property formula:
\begin{multline}
\label{eqn:05}
range_{out_{switch}}:= \\\exists out_{inport}\big(range_{out_{inport}} \wedge relation_{(out_{inport}, out_{switch})}\big),
\end{multline}
where $range_{out_{inport}}:= out_{inport}\ge\text{-}5\wedge out_{inport}\le5$. Formula \eqref{eqn:05} simplifies to $range_{out_{switch}}:= out_{switch}\ge 0\wedge out_{inport}\le5$. Both~\eqref{eqn:04} and~\eqref{eqn:05} are instances of the strongest system-property formulation~\eqref{eqn:01}. Therefore, the derived range $[0,5]$ is the precise range that is stronger than any other system-property regarding the range. The same QE and strongest system-property based automated range propagation has been incorporated on \texttt{HiLiTE} for arbitrary components and interconnections.

The strongest system-property can also be used to derive the transfer function of a feedback structure within a Simulink model so that it can be then treated as a black box possessing that transfer function for simplification of further analysis.
\begin{example}\rm
\label{ex5}
Figure~\ref{fig:LagFilter} shows a Simulink model implementation of a lag filter function using atomic blocks. Using time-dependent version of the strongest system-property~\eqref{eqn:02} with system order equal to 1, the feedback structure within the enclosed box can be simplified into the following property:
\begin{multline}
\label{eqn:06}
 (40+tau)*out_{outport}(k+1)-  (40-tau)*out_{outport}(k)\\-out_{inport}(k) - out_{inport}(k+1)=0
\end{multline}
for $k\ge0$, along with the system-initial-condition $(40+tau)*out_{outport}(0) - out_{inport}(0)=0$ derived using~\eqref{eqn:03}.

It's easy to see that~\eqref{eqn:06} is the difference equation of a lag filter with the transfer function $\frac{1+z^{-1}}{(1+2\tau/T_s)+(1-2\tau/T_s)z^{-1}}$  in discrete-time domain, where $\tau=tau$ and $T_s=1/20$.
\end{example}

\begin{figure}[!htb]
\centering
\includegraphics[scale=0.58]{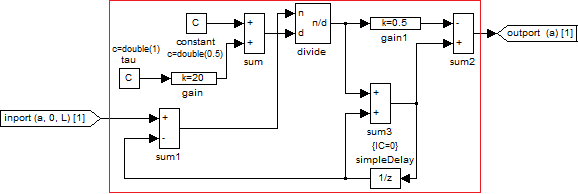}
\caption{Simulink model for a lag filter function using primitive blocks.}
\label{fig:LagFilter}
\end{figure}

A standard neural network of feed-forward structure is a layered system of interconnected nodes called \textit{neurons}, where each neuron consumes outputs from the previous layer and generates inputs to the next layer. An $N$-layer feed-forward neural network eventually maps the input vector $\vec x$ into output $\vec y$, through a composed function $\vec y=f(\vec x):=f^N(f^{N-1}(\dots f^2(f^1(\vec x))\dots))$, where $f^{k\in[1,N]}$ is the feed-forward computation happening at the $k^{th}$ layer. Typically, $f^k$ consists of a linear transformation (defined by a set of weights and a bias) of the output values of the previous layer, and a nonlinear \textit{activation function} applied to the weighted sum. Here we consider piecewise linear ReLU-activation function ($ReLU(\vec x) =\max(0, \vec x)$ element-wise). 

One property class regarding the DNN end-to-end relationship is its robustness against adversarial attacks, i.e., small input perturbations must not cause significant deviations (such as mis-classifications) in the network's output. Proving such a property essentially requires a range propagation from inputs to output. For example, \texttt{ReluVal} \cite{wang2018formal} uses interval arithmetic to perform symbolic interval forward propagation through DNN layers. The resulting range bounds are over-approximation because of lack of capturing dependency with other neurons in the same layer. We have employed our idea of QE-based strongest system-property computation for deriving the precise ranges of ReLU-DNN neurons \cite{ren2019using}. 

\begin{example}\rm
\label{ex6}
For a ReLU-DNN of $N$ layers, let the DNN encoding, regarding its output $y$, be denoted $Encoding_{net}$ that depends on weights and biases of the individual neurons and their activation function and connectivity, be the input-output predicate over $\vec x$ and $y$. Let $R_{\vec x}$ be the predicate of input range constraints. Then the range of an output $y$, denoted $R_{y}$, is obtained as:
\begin{equation}
\label{eqn:07}
R_{y}:=\exists\vec x\big(R_{\vec x}\wedge Encoding_{net}\big).
\end{equation}
\end{example}

Note that, \eqref{eqn:07} is an instance of the strongest system-property formulation~\eqref{eqn:01}. \cite{ren2019using} provides a layer-by-layer forward range propagation computation where each layer has its own formulation of the type~\eqref{eqn:07}), along with a set of on-the-fly QE resolution, to propagate the precise range of $y$ for the ACAS Xu networks \cite{julian2016policy} (of 45 fully-connected ReLU-DNNs, each of which possesses 300 neurons evenly distributed in 6 hidden layers). Owing to QE-based strongest system-property inference, we are able to forward propagate the neurons ranges precisely, allowing for a fine-grained quantitative adversarial analysis.

\section{Conclusion}
\label{sec:conclusion}
Compositional reasoning is a key to scalable approaches for establishing correctness of  model-based designs. In our work, we showed that the foundational principle underlying the compositional reasoning is Quantifier Elimination (QE), which serves as a {\em calculus for composition}. In our compositional reasoning framework,  QE was used to derive the strongest system-property from the given components atomic-properties and their interconnection relation. This converts the system property verification into two fixed steps, independent of the number of components. This also reduces the number of variables needed to verify the system-level postulated properties, therefore improving the efficiency of compositional reasoning. 

We further extended our compositional reasoning framework to support the time-dependent property composition. This is a foundational step towards compositional reasoning of systems with ``memory''. The extension developed a method to determine the system order, given the component orders, and required the time-shifted replicas of component properties during the composition.

We implemented our compositional reasoning paradigm in our prototype verifier \texttt{ReLIC}. It uses the front-end interface of \texttt{AGREE} for model input and composition output, and \texttt{Redlog} at the back-end for performing QE. Within this tool, we also implemented the QE procedure to derive the strongest system-property that can incorporate time-dependence. The proof of a system-level postulated property involves induction, that is also supported in \texttt{ReLIC}.  Our QE-based compositional reasoning has been successfully adopted by an industrial V\&V tool suite from Honeywell Aerospace, and has also been applied for precise range propagation of Simulink models and DNNs for evaluating robustness against adversarial attacks, showing its importance in real world applications.

Aside from QE-based property composition, we noted that a property checking step is also an instance of QE. Thereby, the QE tools provide options to also extend the existing verifiers. In order to take advantage of QE's ability of solving the satisfiability problem over FOL, we integrated the QE tool \texttt{Redlog} as a back-end solver, in parallel to other existing SMT solvers, in the $k$-induction based model-checker \texttt{JKind}, enhancing the latter's capability of model-checking nonlinear properties. In particular, our QE-integrated \texttt{JKind} was able to resolve a fuzzy logic problem involving non-linear computation efficiently, whereas the SMT-integrated version was unable to terminate, showing the extended capability for model-checking, afforded by QE.

Opportunities exist to enhance our tool further to support state-transition representation. Also a wider variety of data-types may be supported in property representation and composition. In particular, the existentially quantified formula of the type given in the paper can be used to express the strongest system-property even for the case when the component atomic-properties are liveness properties. For example, $eventually(out>in)$ can be written as $\bigvee_{k\ge0} (out_k>in_k)$. So if we have a cascade connection of two of same such systems, then its strongest system-property, based on our approach is given by $\exists out_{k\ge0}, out_{j\ge0}\big[\{\bigvee_{k\ge0}(out_k>in_k)\}\wedge\{\bigvee_{j\ge0}(out'_j>out_j)\}\big]$, where $out'$  is the output of the second system. Identifying the classes of liveness properties that admit quantifier elimination is a question of interest of its own right.

\appendices
\section*{Acknowledgment}
The authors acknowledge the helpful inputs from the Rockwell Collins developers Andrew Gacek, John Backes, and Lucas Wagner on their tools \texttt{JKind} and \texttt{AGREE}. Andrew also suggested using Antlr for generating a parser for SMT Lib 2.0.  The fuzzy logic example of Section 2 was provided by Timothy Arnett of University of Cincinnati. \texttt{Redlog} tool was obtained from their site: \url{http://www.redlog.eu/get-redlog/}.

% Can use something like this to put references on a page
% by themselves when using endfloat and the captionsoff option.
\ifCLASSOPTIONcaptionsoff
  \newpage
\fi

% trigger a \newpage just before the given reference
% number - used to balance the columns on the last page
% adjust value as needed - may need to be readjusted if
% the document is modified later
%\IEEEtriggeratref{8}
% The "triggered" command can be changed if desired:
%\IEEEtriggercmd{\enlargethispage{-5in}}

% references section

% can use a bibliography generated by BibTeX as a .bbl file
% BibTeX documentation can be easily obtained at:
% http://mirror.ctan.org/biblio/bibtex/contrib/doc/
% The IEEEtran BibTeX style support page is at:
% http://www.michaelshell.org/tex/ieeetran/bibtex/
%\bibliographystyle{IEEEtran}
% argument is your BibTeX string definitions and bibliography database(s)
%\bibliography{IEEEabrv,../bib/paper}
%
% <OR> manually copy in the resultant .bbl file
% set second argument of \begin to the number of references
% (used to reserve space for the reference number labels box)
\bibliographystyle{unsrt}     % basic style, author-year citations
\bibliography{ref} 

% biography section
% 
% If you have an EPS/PDF photo (graphicx package needed) extra braces are
% needed around the contents of the optional argument to biography to prevent
% the LaTeX parser from getting confused when it sees the complicated
% \includegraphics command within an optional argument. (You could create
% your own custom macro containing the \includegraphics command to make things
% simpler here.)
%\begin{IEEEbiography}[{\includegraphics[width=1in,height=1.25in,clip,keepaspectratio]{mshell}}]{Michael Shell}
% or if you just want to reserve a space for a photo:
\end{document}